\def\bR{\mathbb{R}}
\def\cN{\mathcal{N}}
\def\cQ{\mathcal{Q}}
\def\cF{\mathcal{F}}
\def\cE{\mathcal{E}}
\def\cK{\mathcal{K}}
\def\cH{\mathcal{H}}
\def\ph{\varphi}
\def\cV{\mathcal{V}}
\def\cU{\mathcal{U}}
\def\cS{\mathcal{S}}
\def\be{\begin{equation}}
\def\ee{\end{equation}}
\def\rhs{r.h.s.\ }
\def\lhs{l.h.s.\ }
\def\eg{e.g.\ }
\def\wrt{w.r.t.\ }
\newtheorem{theorem}{Theorem}   [section]
\newtheorem{prop}[theorem]{Proposition}
\newtheorem{lemma}[theorem]{Lemma}
\numberwithin{equation}{section}
\begin{document}
\title{Derivation of the Gross-Pitaevskii Dynamics through Renormalized Excitation Number Operators}
\author{Christian Brennecke\thanks{Institute for Applied Mathematics, University of Bonn, Endenicher Allee 60, 53115 Bonn, Germany}  \and  Wilhelm Kroschinsky\footnotemark[1] }

\maketitle

\begin{abstract}
We revisit the time evolution of initially trapped Bose-Einstein condensates in the Gross-Pitaevskii regime. We show that the system continues to exhibit BEC once the trap has been released and that the dynamics of the condensate is described by the time-dependent Gross-Pitaevskii equation. Like the recent work \cite{BS}, we obtain optimal bounds on the number of excitations orthogonal to the condensate state. In contrast to \cite{BS}, however, whose main strategy consists of controlling the number of excitations with regards to a suitable fluctuation dynamics $t\mapsto e^{-B_t} e^{-iH_Nt}$ with renormalized generator, our proof is based on controlling renormalized excitation number operators directly with regards to the Schr\"odinger dynamics $t\mapsto e^{-iH_Nt}$. 
\end{abstract}

\section{Introduction and Main Result} \label{sec:intro}

The mathematical analysis of spectral and dynamical properties of dilute Bose gases has seen tremendous progress in the past decades after the first experimental observation of Bose-Einstein condensates in trapped atomic gases \cite{CW95, K95}. In this work, we model such experimental setups by considering $N$ bosons moving in $\bR^3$ with energies described by 
		\be \label{eq:Htrap} 
		H^\text{trap}_N = \sum_{j=1}^N \left( -\Delta_{x_j} + V_\text{ext} (x_j) \right) + \sum_{1\leq i<j\leq N} N^2 V(N(x_i -x_j)), 
		\ee 
which acts on a dense subspace of $ L^2_s(\bR^{3N})$, the subspace of $ L^2(\bR^{3N})$ that consists of wave functions that are invariant under permutation of the particle coordinates. We assume the two body interaction $V\in L^1(\bR^3) $ to be pointwise non-negative, radially symmetric and of compact support. The trapping potential $V_\text{ext}\in L_{\text {loc}}^\infty(\bR^3)  $ is assumed to be locally bounded and to satisfy $ \lim_{|x|\to\infty} V_{\text{ext}}(x) =\infty$. 

The scaling $ V_N =  N^2V(N.)$ characterizes the Gross-Pitaevskii scaling which can be understood as a joint thermodynamic and low density limit as $N\to\infty$ (see \eg \cite{LSSY} for a detailed introduction). It ensures that the rescaled potential $V_N$ has a scattering length $ \mathfrak{a} (V_N) = N^{-1} \mathfrak{a} (V)$ of order $O(N^{-1})$ so that both the kinetic and potential energies in \eqref{eq:Htrap} are typically of size $O(N)$ \wrt low energy states. In fact, it is well-known \cite{LSY} that the scattering length $ \mathfrak{a} \equiv \mathfrak{a}(V)$ completely characterizes the influence of the interaction on the leading order contribution to the ground state energy $ E_N = \inf \text{spec} ( H^\text{trap}_N)$:  
		\be \label{eq:energyld} \lim_{N\to\infty} \frac{E_N}{N} = \inf_{\ph\in L^2(\bR^3)} \cE_{\text{GP}}^{\text{trap}}(\ph) \equiv e^\text{trap}_\text{GP}. \ee
Here, $ \cE^{\text{trap}}_{\text{GP}}$ denotes the Gross-Pitaevskii energy functional defined by
		\be\label{def:GPfunc}    \cE_{\text{GP}}^{\text{trap}}(\ph) = \int_{\bR^3} dx\, \Big(  \vert \nabla \ph(x) \vert^2 + V_\text{ext}(x) \vert \ph(x)\vert^2 + 4\pi\mathfrak{a} \vert \ph(x)\vert^4 \Big)   \ee
and the scattering length $\mathfrak{a}$ of the potential $V$ is characterized by
		\be \label{def:a} \mathfrak{a} = \frac1{8\pi}\inf \bigg\{ \int_{\bR^3} dx\,\big( 2|\nabla f(x)|^2 + V(x)|f(x)|^2\big): \lim_{|x|\to\infty}f(x)=1 \bigg\}.\ee
By standard variational arguments, the functional \eqref{def:GPfunc} admits a unique positive, normalized minimizer, denoted in the sequel by $ \ph_{\text{GP}}$, and it turns out that the normalized ground state $ \psi_N$ of $ H^\text{trap}_N$ exhibits complete Bose-Einstein condensation into $ \ph_{\text{GP}}$: if $ \gamma_N^{(1)} =\text{tr}_{2,\ldots,N} |\psi_N\rangle\langle\psi_N|$ denotes the one-particle reduced density of $\psi_N$, then \cite{LS1}
		\be \label{eq:BEC} \lim_{N\to\infty} \langle  \ph_{\text{GP}}, \gamma_N^{(1)}\ph_{\text{GP}}\rangle = 1. \ee
Physically, the identity \eqref{eq:BEC} means that the fraction of particles occupying the condensate state $ \ph_{\text{GP}}$ tends to one in the limit $N\to \infty$. Mathematically, it is equivalent to the convergence of $ \gamma_N^{(1)}$ to the rank-one projection $|\ph_{\text{GP}}\rangle\langle \ph_{\text{GP}}| $ in trace class which implies that one body observables are asymptotically completely determined by $\ph_\text{GP}$. 

It should be noted that the convergence in \eqref{eq:BEC} holds true more generally for approximate ground states $ \psi_N$ that satisfy $ N^{-1}\langle \psi_N,H^\text{trap}_N \psi_N\rangle\leq e^{\text{trap}}_{\text{GP}}  + o(1)$ for an error $o(1)\to 0$ as $N\to\infty$. This has been proved in \cite{LS2} and later been revisited with different tools in \cite{NRS}. Moreover, very recent developments have lead to a significantly improved quantitative understanding of \eqref{eq:energyld} and \eqref{eq:BEC}: generalizing previously obtained results in the translation invariant setting \cite{BBCS1, BBCS2, BBCS3, BBCS4, ABS}, the works \cite{NNRT, BSS1} determine the optimal convergence rates in \eqref{eq:energyld} and \eqref{eq:BEC} while \cite{NT, BSS2} go a step further and determine the low energy excitation spectrum of $ H^\text{trap}_N$ up to errors $o(1)\to 0$ that vanish in the limit $ N\to\infty$. In particular, the main results of \cite{NT, BSS2} imply that the ground state and elementary excitation energies of $ H^\text{trap}_N$ depend on the interaction up to second order only through its scattering length, in accordance with Bogoliubov's predictions \cite{Bog}. It is remarkable that this even remains true up to the third order contribution to the ground state energy of size $ \log N/N$, as recently shown for translation invariant systems in \cite{COSAS}. 

In view of experimental observations of Bose-Einstein condensates, it is natural to study the dynamics of initially trapped Bose-Einstein condensates and to ask whether the system continues to exhibit BEC once the trap is released. Based on the preceding remarks, it is particularly interesting to consider an approximate ground state $ \psi_N$ of $ H_N^{\text{trap}}$ and to analyze its time evolution after releasing the trap $ V_{\text{ext}}$. We model this situation by studying the Schr\"odinger dynamics 
		\[ t\mapsto \psi_{N,t}=e^{-i H_N t}\psi_N \] 
generated by the translation invariant Hamiltonian $H_N$ which is given by
		\be \label{eq:H}  
		H_N = \sum_{j=1}^N -\Delta_{x_j}   + \sum_{1\leq i<j\leq N} N^2 V(N(x_i -x_j)).
		\ee 
As in the spectral setting, it turns out that also the dynamics is determined to leading order by the Gross-Pitaevskii theory: if $ \gamma_{N,t}^{(1)} =\text{tr}_{2,\ldots,N} |\psi_{N,t}\rangle\langle\psi_{N,t}|$ denotes the reduced one-particle density with regards to the evolved state $ \psi_{N,t}$, then \cite{ESY1, ESY2, ESY3, ESY4}
		\be \label{eq:BECdyn} \lim_{N\to\infty} \langle  \ph_{t}, \gamma_{N,t}^{(1)}\ph_{t}\rangle = 1 \ee
for all $t\in \bR$, where $ t\mapsto \ph_t$ solves the time-dependent Gross-Pitaevskii equation
		\be \label{eq:GPdyn}   i\partial_t \ph_t = -\Delta \ph_t + 8\pi \mathfrak{a} |\ph_t|^2 \ph_t. \ee
Like in the spectral setting, the convergence \eqref{eq:BECdyn} can be quantified with an explicit rate as shown first in \cite{P}, later in a Fock space setting in \cite{BDS} and, generalizing the main strategy of \cite{BDS}, with optimal convergence rate in \cite{BS}. Moreover, quite recently the dynamical understanding has been further improved in \cite{COS} which provides a quasi-free approximation of the many body dynamics $ t\mapsto \psi_{N,t}$ with regards to the $ L_s^2(\bR^{3N})$-norm. Comparable norm approximations were previously only available in scaling regimes that interpolate between the mean field and Gross-Pitaevskii regimes, but excluding the latter; for more details on this see \eg \cite{GMM1, GMM2, C, LNS, MPP, BCS, NN1, NN2, K, BNNS, BPaPS, BPePS}.

Although the norm approximation provided in \cite{COS} is of independent interest, the results of \cite{COS} unfortunately do not suffice, yet, to effectively compute important observables such as the time evolved number of excitations orthogonal to the condensate $\ph_t$ or their energy in terms of the quasi-free dynamics, up to errors that vanish in the limit $N\to \infty$ (see also Remark 5) below for a related comment). This likely requires stronger a priori estimates on the full many body evolution $ t\mapsto \psi_{N,t}$ than those proved in \cite{BS}, which are an important ingredient in the proof of \cite{COS}. Since the arguments of \cite{BS} are rather involved, it thus seems first of all worthwhile to revisit and streamline its proof. This is our main motivation and, inspired by recent simplifications in the spectral setting \cite{BFS, H, HST, Br, BBCO}, we provide a novel and, compared to previous derivations, substantially shorter proof of \eqref{eq:BECdyn}. To this end, we combine some algebraic ideas as introduced in \cite{Br} with some of the main ideas of \cite{BS}. Our main result is as follows.
 
\begin{theorem}\label{thm:main}
Let $ V\in L^1(\bR^3)$ be non-negative, radial and of compact support, and let $V_\emph{ext}\in L_{\emph{loc}}^\infty(\bR^3)  $ be such that $ \lim_{|x|\to\infty} V_{\emph{ext}}(x) =\infty$. Let $\psi_N\in L^2_s(\bR^{3N})$ be normalized with one-particle reduced density $\gamma^{(1)}_N $ and assume that     
		\begin{equation}\label{eq:ass}
		\begin{split} 
		o_1(1) & = \big| N^{-1}\langle \psi_N, H^\emph{trap}_N \psi_N \rangle - e^\emph{trap}_\emph{GP} \big| \to 0, \;\;\;\;o_2(1) = 1 - \langle \ph_\emph{GP} , \gamma^{(1)}_N \ph_\emph{GP} \rangle  \to 0,
		\end{split} 
		\end{equation}
in the limit $N \to \infty$, where $\ph_\emph{GP}$ denotes the unique positive, normalized minimizer of the Gross-Pitaevskii functional \eqref{def:GPfunc}. Assume furthermore that $\ph_\emph{GP} \in H^4 (\bR^3)$. 

Then, if $t\mapsto \psi_{N,t} = e^{-i H_N t} \psi_N$ denotes the Schr\"odinger evolution and $\gamma_{N,t}^{(1)}$ its reduced one-particle density, there are constants $C,c > 0$ such that 
		\begin{equation}\label{eq:main} 
		1 - \langle \ph_t , \gamma^{(1)}_{N,t} \ph_t \rangle \leq C \left[ o_1(1) + o_2(1) + N^{-1} \right] \exp \, (c \exp \, (c|t|)) 
		\end{equation}
for all $t \in \bR$, where $t\mapsto \ph_t$ denotes the solution of the time dependent Gross-Pitaevskii equation \eqref{eq:GPdyn} with initial data $\ph_{|t=0} = \ph_\emph{GP}$.
\end{theorem}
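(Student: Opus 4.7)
The plan is to control the expected number of excitations via a Gronwall estimate on a suitably renormalized observable. The starting point is the identity
\begin{equation*}
1 - \langle \ph_t, \gamma^{(1)}_{N,t} \ph_t \rangle = N^{-1} \langle \psi_{N,t}, \cN_+(t)\, \psi_{N,t}\rangle,
\end{equation*}
where $\cN_+(t) = \sum_{j=1}^N q_t(x_j)$ and $q_t = 1 - |\ph_t\rangle\langle\ph_t|$ projects each particle onto $\{\ph_t\}^\perp$. Hence, up to converting the rate, it is enough to bound $\langle \psi_{N,t}, \cN_+(t)\psi_{N,t}\rangle$ by $C[N\,o_1(1) + N\,o_2(1) + 1]\exp(c\exp(c|t|))$. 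A naive attempt to differentiate in $t$ and apply Gronwall's lemma fails because $-i[H_N, \cN_+(t)] + \partial_t \cN_+(t)$ produces contributions of size $O(N)$ stemming from the short-scale correlations that low-energy states of $H_N$ necessarily carry.

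Following the spectral-setting algebra from the recent simplifications \cite{Br, BBCO} and mirroring the dynamical renormalization of \cite{BS}, I would pass to the excitation Fock space via the unitary $U_{\ph_t}:L^2_s(\bR^{3N})\to \cF_+^{\leq N}(\{\ph_t\}^\perp)$ and then introduce a \emph{renormalized} excitation number operator of the form
\begin{equation*}
\wt{\cN}_+(t) = e^{-B_t}\, \cN_+\, e^{B_t},
\end{equation*}
where $B_t$ is an antisymmetric quadratic expression in creation/annihilation operators whose kernel is built from the truncated zero-energy scattering solution for $N^2V(N\cdot)$, modulated by $\ph_t\otimes \ph_t$. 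In contrast to \cite{BS}, this transformation dresses the observable rather than the dynamics, so the Schrödinger evolution $e^{-iH_N t}$ is left untouched.

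The core computation is then
\begin{equation*}
\frac{d}{dt} \langle \psi_{N,t}, \wt{\cN}_+(t) \psi_{N,t}\rangle = \big\langle \psi_{N,t}, \big(\partial_t \wt{\cN}_+(t) - i[H_N, \wt{\cN}_+(t)]\big) \psi_{N,t}\big\rangle,
\end{equation*}
expanded by passing $e^{\pm B_t}$ through $H_N$ via the Baker--Campbell--Hausdorff formula. The expectation is that the defining scattering equation for the kernel of $B_t$, combined with the time derivative of $B_t$ driven by the GP equation, cancels all the $O(N)$ pieces and leaves an effective differential inequality
\begin{equation*}
\Big|\tfrac{d}{dt} \langle \psi_{N,t}, \wt{\cN}_+(t) \psi_{N,t}\rangle \Big| \leq C_t \langle \psi_{N,t}, (\wt{\cN}_+(t) + 1) \psi_{N,t}\rangle + C_t\, N^{-1}\langle \psi_{N,t}, H_N\psi_{N,t}\rangle,
\end{equation*}
with $C_t$ depending polynomially on $\|\ph_t\|_{H^4}$ (for which standard GP well-posedness gives a doubly exponential bound, accounting for the $\exp(c\exp(c|t|))$ factor). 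Energy conservation for $H_N$ and the assumption \eqref{eq:ass} ensure $\langle \psi_{N,t},H_N\psi_{N,t}\rangle = \langle \psi_N, H_N\psi_N\rangle \leq CN$, so Gronwall closes. To conclude, one uses the uniform two-sided comparison $c(\cN_+ + 1) \leq \wt{\cN}_+(t) + 1 \leq C(\cN_+ + 1)$, valid because the kernel of $B_t$ has operator norm $O(1)$ and Hilbert--Schmidt norm $o(1)$, together with the initial estimate $\langle \psi_N, \wt{\cN}_+(0)\psi_N\rangle \leq C[N o_1(1) + N o_2(1) + 1]$ obtained from \eqref{eq:ass} and a standard BEC-to-energy comparison against the GP minimizer.

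The main obstacle is the algebraic content of the commutator computation: one must identify in $e^{-B_t}H_N e^{B_t}$ precisely the quartic and cubic terms that, after differentiating in $t$ and commuting with $\cN_+$, reproduce the singular pieces of $[H_N,\cN_+]$ with opposite sign. This is essentially the same short-scale cancellation that drives the analysis of \cite{BS}, but reorganized so that the bookkeeping happens inside a single observable $\wt{\cN}_+(t)$, which is what produces the promised simplification. The remaining work is to estimate the residual cubic and quartic expressions by $\wt{\cN}_+(t) + 1$ plus an $O(N^{-1})$ multiple of the total energy, using the a priori bounds on $\psi_{N,t}$ inherited from energy conservation and the smoothness of $\ph_t$.
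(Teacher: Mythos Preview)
Your proposal is essentially the strategy of \cite{BS}, not the one the paper implements. Conjugating the observable, $\wt{\cN}_+(t)=e^{-B_t}\cN_+e^{B_t}$, and computing $[H_N,\wt{\cN}_+]$ via BCH is algebraically the same as conjugating the dynamics: $\langle\psi_{N,t},\wt{\cN}_+(t)\psi_{N,t}\rangle=\langle e^{B_t}\psi_{N,t},\cN_+ e^{B_t}\psi_{N,t}\rangle$, and expanding $e^{\pm B_t}H_Ne^{\mp B_t}$ is exactly the commutator series the paper sets out to avoid. The paper's point is that one can replace the full conjugate $e^{-B_t}\cN_+e^{B_t}$ by its \emph{first-order} truncation $\cN_{\text{ren}}=\cN_{\bot\ph_t}+\big(\int k_t\,a^*(Q)a^*(Q)\tfrac{a(\ph_t)^2}{N}+\text{h.c.}\big)$ (see \eqref{def:Nren}), written explicitly as a polynomial in creation/annihilation operators, and close the argument without ever touching operator exponentials. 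The renormalized fields $b_x$ in \eqref{def:newflds} and the operators $\cK_{\text{ren}},\cV_{\text{ren}}$ are likewise linear in $k_t$. That is the promised simplification, and your outline does not capture it.

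There is also a gap in your Gronwall scheme. You propose
\[
\Big|\tfrac{d}{dt}\langle\wt{\cN}_+(t)\rangle\Big|\le C_t\langle\wt{\cN}_+(t)+1\rangle+C_t N^{-1}\langle H_N\rangle,
\]
treating the energy as a harmless constant via conservation. But the residual cubic and quartic pieces of $[H_N,\wt{\cN}_+]$ are controlled not by $N^{-1}H_N$ but by the renormalized kinetic/potential energies $\cK_{\text{ren}},\cV_{\text{ren}}$ (cf.\ Lemma~\ref{lm:dec}, \eqref{eq:cmcHNbnd}); these are comparable to $\cH_N=H_N-Ne_{\text{GP}}+\ldots$, not to $N^{-1}H_N$. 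In \cite{BS} the Gronwall is run on $\cS_{N,t}'+\cN_{\bot\ph_t}$, and here on $\cH_N+\cQ_{\text{ren}}+Ce^{C|t|}(\cN_{\text{ren}}+1)$; in both cases one must propagate an energy-type quantity \emph{together} with the excitation number so that the $\cH_{\text{ren}}$ appearing in the commutator bound is absorbed. Tracking $\wt{\cN}_+$ alone with a constant added from energy conservation does not close. In addition, the initial-time control requires comparing $\langle\psi_N,H_N\psi_N\rangle$ to $Ne_{\text{GP}}$, which involves the potentially unbounded $V_{\text{ext}}$; the paper handles this by a Cauchy--Schwarz argument against $\gamma_N^{(1)}$ and the Euler--Lagrange equation for $\ph_{\text{GP}}$, a step your sketch omits.
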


\noindent \textbf{Remarks:} 
\begin{enumerate}[1)]
\item Theorem \ref{thm:main} was previously shown in \cite[Theorem 1.1]{BS} under the slightly stronger assumption $ V\in L^3(\bR^3)$. Our main contribution is a novel and short proof, valid for $ V\in L^1(\bR^3)$, which is outlined in detail in Section \ref{sec:outline}. 
\item Under suitable conditions on $ V$ and $ V_{\text{ext}}$, the main results of \cite{NNRT, BSS1, NT, BSS2} imply that the assumptions \eqref{eq:ass} are satisfied for low energy states with an explicit rate. Applying these results to the ground state of $ H^\text{trap}_N$, one finds that $ o_1(1)=O(N^{-1})$ and $ o_2(1)=O(N^{-1})$ so that the overall convergence rate in \eqref{eq:main} is of order $ O(N^{-1})$. The quasi-free approximation obtained in \cite{COS} implies that this rate is optimal in $N$.
\item As mentioned earlier, we adapt recent ideas from \cite{Br}, which analyzes the spectrum of Bose gases for translation invariant systems, to the dynamical setting. To illustrate further the usefulness of the method, in particular in the context of Theorem \ref{thm:main}, we sketch in Appendix \ref{app:smallV} an elementary proof of \eqref{eq:ass} with optimal rate for the ground state $\psi_N$ of $H_N^{\text{trap}}$ if $ \|V\|_1$ is sufficiently small. This is analogous to the main result of \cite{BBCS1} in the translation invariant setting. Note that \cite{NNRT} provides a different proof for $V\in L^1(\bR^3)$ under the milder assumption that $ \mathfrak{a}$ is small and that \cite{BSS1} proves a similar result for $V\in L^3(\bR^3)$ without smallness assumption on $ \mathfrak{a}$. Compared to Appendix \ref{app:smallV}, these results have required, however, substantially more work.
\item As already pointed out in \cite{BS}, the assumption that $\ph_\text{GP} \in H^4 (\bR^3)$ follows \eg from suitable growth and regularity assumptions on $ V_{\text{ext}}$, based on the Euler-Lagrange equation for $ \ph_{\text{GP}}$ and elliptic regularity arguments. Since we are not aware of a precise condition on $V_{\text{ext}}$ that guarantees the improved regularity of $\ph_\text{GP}$, we explicitly assume $\ph_\text{GP} \in H^4 (\bR^3)$ for simplicity.
\item One can use \cite{NT, BSS2} to compute $1- \langle \ph_{\text{GP}}, \gamma_N^{(1)}\ph_{\text{GP}}\rangle=O(N^{-1})$ in the ground state of $ H_N^{\text{trap}}$ explicitly, up to subleading errors of order $ o(N^{-1})$ as $N\to\infty$. This follows from arguments presented in \cite{BBCS4} (in fact, based on \cite{BBCS4}, one can derive second order expressions for reduced particle densities at low temperature in the trace class topology \cite{BLN}). In contrast to that, it remains an interesting open question whether the time evolved condensate depletion $ 1 - \langle \ph_t , \gamma^{(1)}_{N,t} \ph_t \rangle$ is similarly determined by the quasi-free evolution derived in \cite{COS}. The methods developed in the present paper may be helpful in this context and we hope to address this point in some future work.
\end{enumerate}

In Section \ref{sec:outline}, we outline the strategy of our proof and we conclude Theorem \ref{thm:main} based on a technical auxiliary result, Proposition \ref{prop:aux}, which is proved in Section \ref{sec:aux}. Standard results on the variational problem \eqref{def:a} and its minimizer, on the solution of the time-dependent Gross-Pitaevskii equation \eqref{eq:GPdyn} and on basic Fock space operators are summarized for completeness in Appendices \ref{app:dyn}, \ref{app:fock} and \ref{app:kern}. We stress that similar results as in Appendices \ref{app:dyn}, \ref{app:fock} and \ref{app:kern} have been explained in great detail in several previous and related works on the derivation of effective dynamics, see \eg \cite{BDS, BCS, BS, BNNS}.

\section{Outline of Strategy and Proof of Theorem \ref{thm:main}} \label{sec:outline}
In this section we explain the proof of Theorem \ref{thm:main}. Our approach is based on ideas previously developed in \cite{BS}, which we now briefly recall and which are most conveniently formulated using basic Fock space operators. To this end, let us start with the identity 
		\be \label{eq:linkNbot}1- \langle \ph_t, \gamma_{N,t}^{(1)}\ph_t\rangle = N^{-1 }\langle   \cN_{\bot \ph_t}  \rangle_{\psi_{N,t}}, \ee
where $\cN_{\bot \ph_t}$ denotes the number of excitations orthogonal to $\ph_t$, that is
		\be\label{def:Nbot} \cN_{\bot \ph_t} = N - a^*(\ph_t)a(\ph_t),  \ee
and where, in the rest of this paper, we abbreviate expectations of observables $\mathcal{O}$ in $ L^2_s(\bR^{3N})$ by $ \langle \mathcal O\rangle_{\phi_N} = \langle\phi_N, \mathcal O\phi_N\rangle$. In \eqref{def:Nbot}, the operators $ a^*(f), a(g) $, for $f,g \in L^2(\bR^3)$, denote the bosonic creation and annihilation operators that are defined by
		\[ \begin{split} 
		(a^* (f) \Psi)^{(n)} (x_1, \dots , x_n) &= \frac{1}{\sqrt{n}} \sum_{j=1}^n f(x_j) \Psi^{(n-1)} (x_1, \dots , x_{j-1}, x_{j+1} , \dots , x_n), \\
		(a (g) \Psi)^{(n)} (x_1, \dots , x_n) &= \sqrt{n+1} \int \overline{g} (x) \Psi^{(n+1)} (x,x_1, \dots , x_n),
		\end{split} \]
for all $ \Psi = (\Psi_0, \Psi_1,\ldots)\in \cF = \mathbb{C}\oplus \bigoplus_{n=1}^\infty L^2_s(\bR^{3n}) $, in particular for $\psi_N\in L_s^2(\bR^{3N})\hookrightarrow \cF$. Note that $ a^*(f)a(g):L_s^2(\bR^{3N})\to L_s^2(\bR^{3N})$ is bounded and preserves the number of particles $N$, for every $f,g\in L^2(\bR^3)$. Moreover, we have the commutation relations
		\[ [a(f), a^*(g)] =\langle f,g\rangle, \hspace{0.5cm} [a(f), a(g)]=0,\hspace{0.5cm}   [a^*(f), a^*(g)]  =0\]
for all $f,g\in L^2(\bR^3)$. Further results on the creation and annihilation operators and their distributional analogues $ a_x, a^*_y$, for $x,y\in\bR^3$, defined through 
		\be\label{def:aas} a(f) = \int dx\, \overline{f} (x)  a_x  , \quad a^* (g) = \int dy\, g(y)  a_y^*  ,  \ee
are collected in Appendix \ref{app:fock}. 

Based on \eqref{eq:linkNbot} and the assumption on $o_1(1)$ in \eqref{eq:ass}, a natural first attempt to prove Theorem \ref{thm:main} might consist in trying to control the growth of the number of excitations $ \cN_{\bot \ph_t}$ based on Gronwall's lemma. However, when examining the derivative 
		\[ \partial_t \big \langle  \cN_{\bot \ph_t} \big\rangle_{\psi_{N,t}}  = \big \langle   [iH_N, \cN_{\bot \ph_t}] \big\rangle_{\psi_{N,t}}  - 2\text{Re}\, \big\langle   a^*(\partial_t \ph_t)a(\ph_t) \big\rangle_{\psi_{N,t}}, \]
one soon realizes that $ [H_N, \cN_{\bot \ph_t}]$ contains several contributions of size $ O(N)$. This is actually not very surprising and a consequence of the fact that $ \psi_{N,t}$ contains short scale correlations related to \eqref{def:a}: heuristically, $ \psi_{N,t}$ can be thought of as a wave function 
		\be \label{eq:corr}  \psi_{N,t} \approx C  \prod_{1\leq i <j \leq N} f(N(x_i-x_j)) \ph_t^{\otimes N}, \ee
where $C$ is a normalization constant and $ f $ solves the zero energy scattering equation 
		\be\label{eq:0en}  -2\Delta f + Vf =0 \ee
with $\lim_{|x|\to\infty}f(x) = 1$. Notice that $f$ minimizes the functional on the \rhs in \eqref{def:a} and that $ f(N.)$ solves the zero energy scattering equation with rescaled potential $V_N$. Further properties of $f$ and related functions are summarized in Appendix \ref{app:kern}. 

Although the correlations $ f(N(x_i-x_j))$ live on a short length scale of order $O(N^{-1})$, basic computations imply that the orthogonal excitations in states as in \eqref{eq:corr} carry a large energy of size $O(N)$, prohibiting a naive control of $ \cN_{\bot \ph_t}$. On the other hand, if one could factor out these correlations, one would remain with a state closer to $\ph_t^{\otimes N}$. In this case, the number and energy of the excitations around $\ph_t$ should be easier to control. Motivated by this heuristics, the main idea of \cite{BS} is to approximate $ \psi_{N,t}$ by 
		\be 
		\begin{split}
		\psi_{N,t}    & \approx C \, \prod_{1\leq i<j\leq N} \Big(1 - (1-f)(N(x_i-x_j))\Big) \ph_t^{\otimes N}\\
		& \approx C \, \Big( 1 - \sum_{1\leq i <j \leq N}  (1-f)(N(x_i-x_j))+\ldots  \Big) \ph_t^{\otimes N}\\
		&\approx C \exp \Big( -  \frac12\int dxdy\,  (1-f)(N(x-y)) a^*_x a^*_y a_x a_y  \Big)  \ph_t^{\otimes N} \approx e^{B_t}  \ph_t^{\otimes N}.
		\end{split}
		\ee
This incorporates the expected correlation structure into the product state $ \ph_t^{\otimes N}$ by applying a unitary, generalized Bogoliubov transformation $ e^{B_t}$ with exponent
		\[ B_t = -\frac12 \int dxdy\,\Big( (1-f)(N(x-y)) \ph_t(x)\ph_t(y) a^*_x a^*_y a(\ph_t)a(\ph_t) - \text{h.c.}\Big). \]

In other words, we expect the state $e^{-B_t}\psi_{N,t}$ to behave approximately like a product state and the main result of \cite{BS} is to establish this intuition rigorously. Ignoring minor technical details, mathematically this is achieved by controlling the number and energy of excitations around $ \ph_t$ \wrt the fluctuation dynamics $ \cU_{N,t}  = e^{-B_t} e^{-iH_Nt} $ that satisfies
		\[   i\partial_t \,\cU_{N,t} =  \mathcal S_{N,t} \, \cU_{N,t} = \Big(   e^{-B_t} H_N e^{B_t} +(i\partial_t e^{-B_t})e^{B_t} \Big ) \,\cU_{N,t}.  \]
As turns out, the energy of the excitations is comparable to $\cS_{N,t}' =\cS_{N,t}- c_{N,t}$ for a suitable constant $ c_{N,t}$, so that the main result of \cite{BS} can be recast as a Gronwall bound
		\be  \label{eq:gron} \partial_t \big \langle \cS_{N,t}' + \cN_{\bot \ph_t}  \big\rangle_{\cU_{N,t}\psi_N}  \lesssim \big \langle   \cS_{N,t}' + \cN_{\bot \ph_t}  \big\rangle_{\cU_{N,t}\psi_N}. \ee
		
Although conceptually straightforward, the main difficulty of the above strategy consists in the fact that the action of $ e^{-B_t}(\cdot)e^{B_t}$, that is needed to compute $ \cS_{N,t}$, is not explicit. The novelty of \cite{BS} has therefore been to analyse $ e^{-B_t}(\cdot)e^{B_t}$ in detail, providing an explicit description of $ \cS_{N,t}$ in terms of a convergent commutator series expansion. This can be used to explicitly evaluate the commutator $ [\cS_{N,t}, \cN_{\bot \ph_t}]$ that occurs on the left hand side in \eqref{eq:gron} and this is crucial to close the Gronwall argument. 

The drawback of this method is that the series expansions are rather involved and produce a large number of irrelevant error terms. It would therefore be quite desirable to extract only the relevant terms without the need for operator exponential expansions, similarly as in \cite{Br, BBCO} in the spectral setting. Our key observation in this regard is that \eqref{eq:gron} is essentially equivalent to controlling the modified energy and excitation operators
		\be \label{eq:gronnew} 
		\begin{split}
		 &\big \langle  e^{B_t}\big( \cS_{N,t}' + \cN_{\bot \ph_t} \big) e^{-B_t}\,  \big\rangle_{\psi_{N,t}}  
		  \approx \big \langle    \cH_N \big\rangle_{\psi_{N,t} } +  \big\langle   \mathcal Q_\text{ren} \big\rangle_{\psi_{N,t} }+  \big\langle  \cN_{\text{ren}}  \big\rangle_{\psi_{N,t} },
		\end{split}
	 	\ee
where we have inserted heuristically several approximations from \cite{BS}. In \eqref{eq:gronnew}, we set 
		\be \label{def:cHN} \cH_{N} =  H_N - Ne_{\text{GP}} -  \big(a^*(\ph_t) a(Q_t i\partial_t \ph_t)+\text{h.c.}\big)  + \langle i\partial_t \ph_t,\ph_t\rangle\cN_{\bot \ph_t}  \ee
and $ e_\text{GP} \equiv \cE_{\text{GP}}(\ph_t) $ for the translation invariant energy functional $\cE_{\text{GP}}$, defined by
		\[\cE_{\text{GP}} (\ph) = \int  dx\, \Big(  \vert \nabla \ph(x) \vert^2 +   4\pi\mathfrak{a} \vert \ph(x)\vert^4 \Big),\]
Recall that $ e_{\text{GP}} $ is a conserved quantity if $t\mapsto \ph_t$ is a sufficiently regular solution of \eqref{eq:GPdyn}, in particular under the assumptions on $\ph_{\text{GP}} $ in Theorem \ref{thm:main} (see Prop. \ref{prop:GPdyn}). 

In \eqref{eq:gronnew}, we have furthermore introduced renormalized excitation operators  
		\be\begin{split}
		\label{def:Nren} 
		 \cN_{\text{ren}} &=  \cN_{\bot \ph_t}  + \int dx dy \,  \Big(  k_t (x,y)   a^*(Q_{t,x}) a^*(Q_{t,y}) \frac{a(\ph_t)}{\sqrt{N} }\frac{a(\ph_t)}{\sqrt N}  + \text{h.c.}\Big) , \\
		\mathcal Q_\text{ren} &= \frac12\int dx dy\,  i \partial_t k_t (x,y)   a^*(Q_{t,x}) a^*(Q_{t,y}) \frac{a(\ph_t)}{\sqrt{N} }\frac{a(\ph_t)}{\sqrt N} +\text{h.c.} 
		\end{split} \ee	
in terms of orthogonal excitation fields $ a^*(Q_{t,x}),  a(Q_{t,y})$, defined as follows: denoting by $Q_t$ the projection $ Q_t = 1-|\ph_t \rangle\langle\ph_t |$ onto the orthogonal complement of $\ph_t$, we set
		\be \label{def:aQ} a^*(Q_t f) = \int dx\, f(x) a^*(Q_{t,x}), \hspace{0.5cm}  a(Q_t g) = \int dx\, \overline g (x) a^*(Q_{t,x}). \ee
It is then straightforward to verify that $\cN_{\bot \ph_t} = \int dx \,a^*(Q_{t,x})a(Q_{t,x})$ and that
		\[\begin{split}  a^*(Q_{t,x}) &= a^*_x - \overline \ph_t(x) a^*(\ph_t), \;\;\;\;  a(Q_{t,y}) = a_y -  \ph_t(y) a(\ph_t), \\
		  [a(Q_{t,x}), a^*(Q_{t,y}) ]  &= [a_x, a^*(Q_{t,y}) ]=  Q_t(x,y), \;\;   [a(Q_{t,x}), a^*(\ph_t) ]  = (Q_t \ph)(x) = 0, \end{split} \]
where $  Q_t(x,y) = \delta(x,y)- \ph_t(x)\overline{\ph_t(y)}$ denotes the integral kernel of $Q_t$. 

Finally, fixing some $\chi \in C_c^\infty (B_{2r}(0))$ with $ \chi_{|  B_{r}(0)}\equiv 1$, we define the kernel $k_t$ by
		\be \begin{split} \label{def:k} &(x,y) \mapsto k_t(x,y)   = N(1-f)(N(x-y))\chi(x-y) \ph_t(x)\ph_t(y) \in L^2(\bR^3\times \bR^3).
		\end{split} \ee

Notice that $\cH_N, \cN_{\text{ren}}$ and $ \cQ_\text{ren}$ are time-dependent. For simplicity, we suppress this dependence in our notation. Moreover, we remark that the cutoff $\chi$ in the definition of $k_t$ is for technical reasons only (we ignored this technicality in the heuristic arguments outlined above). Basic properties of the kernel $ k_t$ are collected in Appendix \ref{app:kern}. 

We assume throughout the remainder that the radius $r>0$, related to $\chi \in C_c^\infty (B_{2r}(0))$ in \eqref{def:k}, is chosen sufficiently small, but fixed (independently of $N$). As explained below in Lemma \ref{lm:aux}, this implies that\footnote{By $ \pm A \leq B$ for self-adjoint operators $A,B$, we abbreviate that $-B\leq A\leq B$. Moreover, generic constants independent of $N$ and $t$ are typically denoted by $c,C>0$ and may vary from line to line.} for some $C>0$ and every $t\in\bR$ it holds true that
		\be\label{eq:NNren} C^{-1} (\cN_{\bot\ph_t} +1) \leq   ( \cN_{\text{ren}} +1 ) \leq C  ( \cN_{\bot\ph_t} +1), \hspace{0.5cm} \pm \cQ_\text{ren}\leq C e^{C|t|} ( \cN_{\text{ren}} +1 ). \ee

Having introduced all objects that are relevant in the sequel, let us briefly comment on the heuristics underlying the approximation \eqref{eq:gronnew}. What \cite{BS} has shown rigorously is that transformations $ e^{B_t} $ as above act on creation and annihilation operators approximately like standard Bogoliubov transformations. It then turns out that $ e^{-B_t} (\cdot) e^{B_t}$ regularizes certain singular contributions to $ H_N $, and these renormalizations are essentially obtained from the contributions linear in $B_t$ when expanding $ e^{-B_t} a_x e^{B_t} \approx a_x  + [ a_x, B_t ] $. In \eqref{eq:gronnew}, we simply inserted this linear approximation on the level of $ L^2_s(\bR^{3N})$.   
	
Finally, let us point out that it is straightforward to compute the time derivative of the right hand side in \eqref{eq:gronnew} explicitly - in strong contrast to the computation of the left hand side in \eqref{eq:gron}. This naturally raises the question whether a Gronwall bound can be proved directly on the right hand side of \eqref{eq:gronnew}, avoiding the use of operator exponential expansions altogether, similarly as in \cite{BFS, Br, BBCO} in the spectral setting. On the technical level, this is our main contribution and it leads to the following result. 

\begin{prop}\label{prop:aux} 
Let $ \cH_N$ be as in \eqref{def:cHN} and set $  \psi_{N,t}= e^{-iH_Nt}\psi_N$ for $t\in\bR$ and initial data $\psi_N\in L^2_s(\bR^{3N})$ as in Theorem \ref{thm:main}. Then, for suitable $c, C>0$, we have that 
		\[ \cN_{\bot\ph_t} \leq  \cH_N + \cQ_\emph{ren} +  Ce^{C|t|}  ( \cN_{\emph{ren}} +1)\]
as well as the Gronwall bound
		\[ \partial_t \,  \big \langle    \cH_N +\cQ_\emph{ren} +  Ce^{C|t|}   (\cN_{\emph{ren}} +1)  \big\rangle_{\psi_{N,t} }  \leq c \,e^{c |t|} \, \big \langle    \cH_N + \cQ_\emph{ren}  +  Ce^{C|t|}  (  \cN_{\emph{ren}} +1) \big\rangle_{\psi_{N,t} }.\]
\end{prop}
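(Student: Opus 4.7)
My plan. I would prove the two statements of Proposition \ref{prop:aux} essentially independently.

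For the coercivity bound $\cN_{\bot\ph_t}\leq\cH_N+\cQ_{\text{ren}}+Ce^{C|t|}(\cN_{\text{ren}}+1)$, I would insert $a_x=\ph_t(x)a(\ph_t)+a(Q_{t,x})$ into the second-quantized form of $H_N$ and expand $H_N-Ne_{\text{GP}}$ according to the number of orthogonal excitation fields $a^*(Q_{t,\cdot})$, $a(Q_{t,\cdot})$. The number-preserving constraint $a^*(\ph_t)a(\ph_t)=N-\cN_{\bot\ph_t}$, combined with the Gross--Pitaevskii equation, produces the linear-in-excitation correction $(a^*(\ph_t)a(Q_ti\partial_t\ph_t)+\text{h.c.})-\langle i\partial_t\ph_t,\ph_t\rangle\cN_{\bot\ph_t}$ already absorbed in $\cH_N$. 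The remaining quadratic and cubic pieces are handled in the spirit of \cite{Br,BBCO}: the singular cubic term $N^{-1}\int dx\,dy\,V_N(x-y)\ph_t(x)\ph_t(y)\,a^*(Q_{t,x})a^*(Q_{t,y})a(\ph_t)a(\ph_t)+\text{h.c.}$ is combined with $\cQ_{\text{ren}}$, and the scattering identity $-2\Delta f+Vf=0$ converts this combination into contributions bounded by $Ce^{C|t|}(\cN_{\text{ren}}+1)$. The remaining quadratic form in the excitation fields is manifestly non-negative and, together with the kinetic energy of the condensate, dominates $\cN_{\bot\ph_t}$.

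For the Gronwall inequality, the Schr\"odinger equation gives, for every self-adjoint time-dependent observable $\mathcal O_t$,
\[ \partial_t\langle\mathcal O_t\rangle_{\psi_{N,t}}=\langle\partial_t\mathcal O_t\rangle_{\psi_{N,t}}+i\langle[H_N,\mathcal O_t]\rangle_{\psi_{N,t}}, \]
and I would compute both contributions term by term for $\mathcal O_t=\cH_N+\cQ_{\text{ren}}+Ce^{C|t|}(\cN_{\text{ren}}+1)$. The individual contributions contain singular terms of size $O(N)$, originating from the commutators $i[V_N,\cN_{\text{ren}}]$ and from $\partial_t\cQ_{\text{ren}}$, but these are designed to cancel in pairs: using the GP equation $i\partial_t\ph_t=(-\Delta+8\pi\mathfrak{a}|\ph_t|^2)\ph_t$ inside the explicit time derivatives $\partial_t k_t$ and $i\partial_t^2 k_t$ converts a $t$-derivative on $k_t$ into a Laplacian acting on $k_t$, which via the scattering equation turns into $NV_N(x-y)\ph_t(x)\ph_t(y)$, up to controllable remainders; this exactly matches the kernel produced by $[V_N,\cN_{\text{ren}}]$. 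The same mechanism handles the $t$-derivative of the linear-in-field term $a^*(\ph_t)a(Q_ti\partial_t\ph_t)$ inside $\cH_N$. After these cancellations, every surviving term is at most quadratic in the excitation fields weighted by regular multiplications involving $\ph_t,\nabla\ph_t,\Delta\ph_t$ (whose regularity follows from the $H^4$ assumption on $\ph_{\text{GP}}$ together with Prop. \ref{prop:GPdyn}) and can then be estimated by standard Cauchy--Schwarz bounds for creation/annihilation operators (Appendix \ref{app:fock}) together with the kernel bounds for $k_t$ in Appendix \ref{app:kern}. The exponential weight $Ce^{C|t|}(\cN_{\text{ren}}+1)$ is then fixed large enough that its own derivative $C^2 e^{C|t|}(\cN_{\text{ren}}+1)$, together with the commutator $i[H_N,\cN_{\text{ren}}]$ estimated in the same fashion, dominates these quadratic remainders, thus closing the differential inequality.

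The main obstacle will be the bookkeeping in the cancellation step: the commutators $i[H_N,\cH_N]$, $i[H_N,\cQ_{\text{ren}}]$, and $i[H_N,\cN_{\text{ren}}]$ each generate several cubic and quartic pieces through the normal-ordering of $V_N$ against the creation/annihilation operators, and one must organize these carefully to exhibit the joint scattering/GP cancellation. The cutoff $\chi$ in \eqref{def:k} additionally produces localization errors via $[\Delta,\chi]$ that have to be shown to contribute only sub-leading terms bounded by a constant times $\cN_{\text{ren}}+1$, which requires the radius $r>0$ to be chosen sufficiently small, as indicated below \eqref{eq:NNren}. This is the technical analogue of the spectral argument of \cite{Br} in the present dynamical setting.
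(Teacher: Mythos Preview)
Your proposal misidentifies the central cancellation mechanism, and without it the argument does not close. For the coercivity bound you claim that the singular pairing term
\[
\frac12\int dx\,dy\,N^3V(N(x-y))\,\ph_t(x)\ph_t(y)\,a^*(Q_{t,x})a^*(Q_{t,y})\,\frac{a(\ph_t)^2}{N}+\text{h.c.}
\]
is neutralized by $\cQ_{\text{ren}}$. This is impossible: the kernel $i\partial_t k_t$ of $\cQ_{\text{ren}}$ has $L^2$ norm $O(1)$ uniformly in $N$ (Lemma~\ref{lm:ker}), so $\pm\cQ_{\text{ren}}\leq Ce^{C|t|}(\cN_{\text{ren}}+1)$ is itself an error term and cannot absorb an operator whose kernel has $L^2$ norm of order $N^{3/2}$. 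The missing ingredient is the renormalized field $b_x=a(Q_{t,x})+a^*\big((Q_t\otimes Q_tk_t)_x\big)\,a(\ph_t)^2/N$ and the manifestly non-negative comparison operator $\cH_{\text{ren}}=\int b_x^*(-\Delta_x)b_x+\frac12\int N^2V(N(x-y))\,b_x^*a^*(Q_{t,y})a(Q_{t,y})b_x$: expanding $\cK_{\text{ren}}$ places $-\Delta_x$ on the kernel $k_t$, and only then does the zero-energy equation produce the $N^3(Vf)(N\cdot)$ contribution that combines with the singular pairing term (Lemma~\ref{lm:dec}). The scattering identity enters through the kinetic energy, not through $\cQ_{\text{ren}}$.

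The same omission breaks your Gronwall step. The large contribution to $i[H_N,\cN_{\text{ren}}]$, with kernel $N^3V(1-f)(N\cdot)\,\ph_t\otimes\ph_t$, is not cancelled by $\partial_t\cQ_{\text{ren}}$, whose kernel $i\partial_t^2k_t$ is again $O(1)$ in $L^2$; your claim that the GP equation converts a $t$-derivative on $k_t$ into ``a Laplacian acting on $k_t$'' is incorrect, since $i\partial_t\ph_t=-\Delta\ph_t+8\pi\mathfrak{a}|\ph_t|^2\ph_t$ puts $-\Delta$ on $\ph_t$, not on $(1-f)(N\cdot)$, so the scattering identity never appears this way. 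In the paper, $[i\cH_N,\cN_{\text{ren}}]$ is instead controlled via the algebraic identity $[b_x,\cN_{\text{ren}}]=b_x+(\text{terms quadratic in }k_t)$, so that $[i\cH_{\text{ren}},\cN_{\text{ren}}]$ nearly vanishes. The actual role of $\cQ_{\text{ren}}$ is also different from what you describe: writing $\mathcal X=-(a^*(\ph_t)a(Q_ti\partial_t\ph_t)+\text{h.c.})+\langle i\partial_t\ph_t,\ph_t\rangle\cN_{\bot\ph_t}$, the dangerous contributions to $\partial_t\langle\cH_N\rangle$ arise from $[i\cH_N,\mathcal X]+\partial_t\mathcal X$ and are cancelled by $[i\cH_N,\cQ_{\text{ren}}]$ (Lemma~\ref{lm:dec2}), while $\partial_t\cQ_{\text{ren}}$ itself is harmless (Lemma~\ref{lm:aux}).
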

Assuming the validity of Proposition \ref{prop:aux}, whose proof is explained in detail in the next Section \ref{sec:aux}, we conclude this section with the proof of Theorem \ref{thm:main}. 

\begin{proof}[Proof of Theorem \ref{thm:main}] 
This was already explained in \cite{BS}; we recall the main steps. Without loss of generality assume $t\geq 0$. By \eqref{eq:linkNbot}, note that \eqref{eq:main} is equivalent to 
		\be\label{eq:proofgoal}  \big\langle  \cN_{\bot\ph_t}  \big\rangle_{\psi_{N,t} } \leq C \big( 1+N o_1(1) + N o_2(1) \big) \exp(c \exp c\,t). \ee
By Prop. \ref{prop:aux}, Gronwall's lemma and the bound \eqref{eq:NNren}, we know that 
		\[\begin{split}
		 \big\langle  \cN_{\bot\ph_t}  \big\rangle_{\psi_{N,t} } \leq  \big \langle  \cH_N +\cQ_\text{ren} +  Ce^{Ct}  ( \cN_{\text{ren}}+1) \big\rangle_{\psi_{N,t} } & \leq c_t \,\big \langle  (\cH_{N} +\cQ_\text{ren} +C\cN_{\text{ren}}+1)_{|t=0} \big\rangle_{\psi_{N} }\\
		 & =  c_t \, \big \langle  (\cH_{N}+\cQ_\text{ren})_{|t=0}  + \cN_{\bot \ph_{\text{GP} } }  + 1 \big\rangle_{\psi_{N} }   
		\end{split}\]
for some time-dependent constant $c_t \leq  C \exp(c\,\exp(c\,t))$. Hence, it is enough to analyze $  \langle  (\cH_N)_{|t=0}\rangle_{\psi_{N} } $, $  \langle  (\cQ_\text{ren})_{|t=0}\rangle_{\psi_{N} }$  and  $ \langle \cN_{\bot \ph_{\text{GP} } } \rangle_{\psi_{N} } $. Using once again \eqref{eq:linkNbot}, we have that  
		\[ \big\langle \cN_{\bot \ph_{\text{GP} } } \big\rangle_{\psi_{N} }  = N\big (  1 - \langle \ph_\text{GP} , \gamma^{(1)}_N \ph_\text{GP} \rangle \big) = N o_2(1) \] 
and, by \eqref{eq:NNren}, that
		\[\langle  (\cQ_\text{ren})_{|t=0}\rangle_{\psi_{N} }\leq C   \big\langle \cN_{\bot \ph_{\text{GP} }  } +1\big\rangle_{\psi_{N} }  = C\big (1+ N o_2(1) \big). \] 
By \eqref{def:cHN}, on the other hand, we have that 
		\[\begin{split}
		\big\langle  (\cH_N)_{|t=0} \big\rangle_{\psi_{N} }  
		&\!\!\!\! =   \big\langle H_N \!-\! Ne_\text{GP}   - \! \big(a^*(\ph_t) a(Q_t i\partial_t \ph_t)+\text{h.c.}\big)_{|t=0} \!+\! \langle i\partial_t \ph_t,\ph_t\rangle_{|t=0} \,\cN_{\bot \ph_\text{GP}}\big\rangle_{\psi_{N}}\\
		&\!\!\!\! =  \big\langle H_N \!-\!  Ne_\text{GP}   - \! \big(a^*(\ph_t) a(Q_t i\partial_t \ph_t)+\text{h.c.}\big)_{|t=0}\big\rangle_{\psi_{N}}  \!+\!  \langle i\partial_t \ph_t,\ph_t\rangle_{|t=0} \,N o_2(1)
		\end{split}\]
and, by \eqref{eq:GPdyn}, that $\langle i\partial_t \ph_t,\ph_t\rangle_{|t=0} = e_\text{GP}+ 4\pi\mathfrak{a}\|\ph_{\text{GP}}\|_4^4    =O(1) $. Since we assume that $\ph_\text{GP}$ minimizes the Gross-Pitaevskii functional $\cE_\text{GP}^\text{trap}$, it solves the Euler-Lagrange equation  
		\[  \big( -\Delta + V_{\text{ext}} + 8\pi\mathfrak{a}|\ph_\text{GP}|^2\big) \ph_\text{GP} =   \mu_\text{GP} \ph_\text{GP}, \hspace{0.5cm}  \mu_\text{GP}= e_\text{GP}+ 4\pi\mathfrak{a}\|\ph_{\text{GP}}\|_4^4.  \]
Combining this with \eqref{eq:GPdyn}, we then find 
		\[\begin{split}
		& -  \big\langle \big(a^*(\ph_t) a(Q_t i\partial_t \ph_t)+ a^*(Q_t i\partial_t \ph_t)a(\ph_t)\big)_{|t=0}\big\rangle_{\psi_{N}} \\
		&\;=  2\text{Re} \,   \big\langle a^*(\ph_\text{GP}) a(V_{\text{ext}}\ph_\text{GP}) \big\rangle_{\psi_{N}} - 2 \langle \ph_\text{GP}, V_\text{ext}\ph_\text{GP} \rangle \big \langle a^*(\ph_\text{GP}) a^*(\ph_\text{GP}) \big\rangle_{\psi_{N}} \\
		&\; = 2 N\text{Re} \langle \ph_\text{GP},  \gamma_N^{(1)} V_{\text{ext}}\ph_\text{GP}\rangle - 2 N \langle \ph_\text{GP}, V_\text{ext} \ph_\text{GP} \rangle + \langle \ph_\text{GP}, V_\text{ext}\ph_\text{GP}\rangle N o_2(1),  
		\end{split}\]
where $  \langle \ph_\text{GP}, V_\text{ext}\ph_\text{GP}\rangle \leq e_\text{GP}=O(1)$. Now, if we replace $ V_{\text{ext}}$ by $ V'_{\text{ext}} = V_{\text{ext}}+\Lambda$ for some sufficiently large $\Lambda>0$ so that $V'_{\text{ext}}\geq0$, by the assumption that $V_{\text{ext}}\in L^\infty_\text{loc}(\bR^3)$ with $ \lim_{|x|\to\infty}V_{\text{ext}}=\infty$, and use that $ 0\leq (\gamma_N^{(1)}\big)^2\leq \gamma_N^{(1)}\leq 1$, Cauchy-Schwarz implies  
		\[\begin{split}
		 2\text{Re} \,\langle \ph_\text{GP},  \gamma_N^{(1)} V_{\text{ext}}\ph_\text{GP}\rangle &\leq  \langle \ph_\text{GP}, \gamma_N^{(1)} V'_{\text{ext}}\gamma_N^{(1)}\ph_\text{GP}\rangle     + \langle \ph_\text{GP}, V'_\text{ext} \ph_\text{GP} \rangle- 2\Lambda\langle \ph_\text{GP},  \gamma_N^{(1)} \ph_\text{GP}\rangle\\
		 &\leq \langle \ph_\text{GP}, \gamma_N^{(1)} V'_{\text{ext}}\gamma_N^{(1)}\ph_\text{GP}\rangle -\Lambda + \langle \ph_\text{GP}, V_\text{ext} \ph_\text{GP} \rangle +2 \Lambda \, o_2(1)\\
		 & \leq \text{tr}\,\big|V'_{\text{ext}}\big|^{1/2} \big(\gamma_N^{(1)}\big)^2 \big|V'_{\text{ext}}\big|^{1/2} -\Lambda+ \langle \ph_\text{GP}, V_\text{ext} \ph_\text{GP} \rangle +2  \Lambda \, o_2(1)\\
		 & \leq \text{tr}\, \gamma_N^{(1)} V_{\text{ext}} + \langle \ph_\text{GP}, V_\text{ext} \ph_\text{GP} \rangle +2  \Lambda \, o_2(1). 
		 \end{split}\]
This shows that
		\[ \big\langle  (\cH_N)_{|t=0} \big\rangle_{\psi_{N} }\leq  \big\langle H_N^\text{trap}\big\rangle_{\psi_N} -  Ne_\text{GP}^\text{trap}   + CN o_2(1) \leq C \big( N o_1(1)+N o_2(1) \big). \] 
Collecting the previous bounds, we obtain \eqref{eq:proofgoal} and thus \eqref{eq:main}.
\end{proof}
\section{Renormalized Hamiltonian and Proof of Prop.\  \ref{prop:aux}}\label{sec:aux}

The goal of this section is to prove Proposition \ref{prop:aux}. Our proof is based on several lemmas that collect important properties of the operators $ \cH_N, \cN_\text{ren}$ and $ \cQ_\text{ren}$, defined in \eqref{def:cHN} and \eqref{def:Nren}, respectively. We start with the proof of the bound \eqref{eq:NNren} and the derivation of the leading order contributions to $ \partial_t \, \cN_\text{ren}$ and $ \partial_t  \cQ_\text{ren}$. 

\begin{lemma}\label{lm:aux}
Let $\cN_{\emph{ren}}, \cQ_{\emph{ren}} $ be as in \eqref{def:Nren} and choose $\chi \in C_c^\infty (B_{2r}(0))$, $ \chi_{|  B_{r}(0)}\equiv 1$ in \eqref{def:k} so that $r>0$ is small enough. Then, for some $C>0$ and every $t\in\bR$, we have
		\be\label{eq:lmaux} C^{-1} (\cN_{\bot\ph_t} +1) \leq  ( \cN_{\emph{ren}}+1)  \leq C  ( \cN_{\bot\ph_t} +1),\hspace{0.5cm} \pm \cQ_\emph{ren}\leq C e^{C|t|} ( \cN_{\emph{ren}} +1 ) \ee
and
		\be\begin{split}
		 \label{eq:lmaux2}  \pm \big(  \partial_t \,\cN_{\emph{ren}} - \big[  -  \big(a^*(\ph_t) a(Q_t \partial_t \ph_t)-\emph{h.c.}\big)  + \langle \partial_t \ph_t,\ph_t\rangle\cN_{\bot \ph_t}, \cN_{\emph{ren}} \big]\big)&\leq  Ce^{C|t|} (\cN_{\emph{ren}} + 1), \\
		 \pm \big(  \partial_t \cQ_{\emph{ren}} - \big[  -  \big(a^*(\ph_t) a(Q_t \partial_t \ph_t)-\emph{h.c.}\big)  + \langle \partial_t \ph_t,\ph_t\rangle\cN_{\bot \ph_t}, \cQ_{\emph{ren}} \big]\big)&\leq  Ce^{C|t|} (\cN_{\emph{ren}} + 1).    
		 \end{split} \ee
\end{lemma}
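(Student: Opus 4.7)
The equivalence \eqref{eq:lmaux} follows from the decomposition $\cN_{\emph{ren}}-\cN_{\bot\ph_t}=K+K^*$ with $K=\int dxdy\,k_t(x,y)a^*(Q_{t,x})a^*(Q_{t,y})a(\ph_t)^2/N$. A Cauchy--Schwarz argument in the $(x,y)$ integration, combined with $\int\|a(Q_{t,x})a(Q_{t,y})\psi\|^2\,dxdy\le\langle\psi,\cN_{\bot\ph_t}^2\psi\rangle$ and $a^*(\ph_t)a(\ph_t)\le N$, yields $\pm(K+K^*)\le \|k_t\|_2(\cN_{\bot\ph_t}+1)$. The key input, to be recorded in Appendix~\ref{app:kern}, is the uniform estimate $\|k_t\|_2\le C\sqrt{r}$, which follows from $(1-f)(x)\sim\mathfrak{a}/|x|$ outside $\operatorname{supp} V$ combined with $\|\ph_t\|_4^4\lesssim\cE_\emph{GP}(\ph_t)$, the latter being conserved along the GP flow. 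Choosing $r$ small enough gives the two-sided equivalence. The bound on $\cQ_{\emph{ren}}$ follows the same Cauchy--Schwarz template with $k_t$ replaced by $\tfrac{i}{2}\partial_t k_t$; now the factor $\|\partial_t k_t\|_2\le Ce^{C|t|}$ produces the time-dependent constant, using the GP equation for $i\partial_t\ph_t$ and the regularity bounds of Prop.~\ref{prop:GPdyn}.

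For the derivative estimates \eqref{eq:lmaux2}, I would write the generator as $G=-(a^*(\ph_t)a(Q_t\partial_t\ph_t)-\emph{h.c.})+\alpha\cN_{\bot\ph_t}$ with $\alpha=\langle\partial_t\ph_t,\ph_t\rangle$ (purely imaginary, since $\|\ph_t\|_2=1$) and observe that $G=d\Gamma(M)$ is the second-quantization of the anti-self-adjoint one-body operator $M=|Q_t\partial_t\ph_t\rangle\langle\ph_t|-|\ph_t\rangle\langle Q_t\partial_t\ph_t|+\alpha Q_t$. The identities $[d\Gamma(M),a^*(f)]=a^*(Mf)$ and $[d\Gamma(M),a(g)]=-a(M^*g)$, together with $Q_t\ph_t=0$ and $\partial_t\ph_t=\alpha\ph_t+Q_t\partial_t\ph_t$, then give $[G,\cN_{\bot\ph_t}]=\partial_t\cN_{\bot\ph_t}$ exactly (so the $\cN_{\bot\ph_t}$-part of $\cN_{\emph{ren}}$ drops out of the difference), and the explicit mismatches for the factors appearing in $K$:
\begin{align*}
\partial_t a^*(Q_{t,x})-[G,a^*(Q_{t,x})]&=-\overline{\ph_t(x)}\,a^*(Q_t\partial_t\ph_t)-\alpha\,a^*(Q_{t,x}),\\
\partial_t a(\ph_t)-[G,a(\ph_t)]&=-\alpha\,a(\ph_t).
\end{align*}
Inserting these into $\partial_t K-[G,K]$ and using that creation operators commute, the error splits as $E=E_a+E_b+E_c$, where (a) $E_a=\int\partial_t k_t(x,y)\,a^*(Q_{t,x})a^*(Q_{t,y})a(\ph_t)^2/N\,dxdy$, structurally identical to $\cQ_{\emph{ren}}$; (b) $E_b=-2\,a^*(Q_t g)a^*(Q_t\partial_t\ph_t)a(\ph_t)^2/N$ with reduced kernel $g(y)=\int k_t(x,y)\overline{\ph_t(x)}\,dx$ satisfying $\|g\|_2\le\|k_t\|_2$; and (c) $E_c=-4\alpha K$, from the four scalar $-\alpha$ contributions.

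Each piece is then controlled by the same Cauchy--Schwarz template as in the first step: $E_a+E_a^*$ by $\|\partial_t k_t\|_2\le Ce^{C|t|}$; $E_b+E_b^*$ by $\|g\|_2\cdot\|Q_t\partial_t\ph_t\|_2\le C\sqrt{r}\,e^{C|t|}$; and $E_c+E_c^*=-4\alpha(K-K^*)$, which is self-adjoint because $\overline\alpha=-\alpha$, and bounded by $|\alpha|\|k_t\|_2(\cN_{\bot\ph_t}+1)\le Ce^{C|t|}(\cN_{\emph{ren}}+1)$. Summing delivers the estimate for $\cN_{\emph{ren}}$. The derivative estimate for $\cQ_{\emph{ren}}$ proceeds identically after replacing $k_t$ by $\tfrac{i}{2}\partial_t k_t$; the only additional ingredient is $\|\partial_t^2 k_t\|_2\le Ce^{C|t|}$, which, through $\partial_t^2\ph_t=-\Delta^2\ph_t+8\pi\mathfrak{a}\Delta(|\ph_t|^2\ph_t)-8\pi i\mathfrak{a}\partial_t(|\ph_t|^2\ph_t)$, is precisely where the $H^4$-regularity assumption on $\ph_\emph{GP}$ enters via Prop.~\ref{prop:GPdyn}. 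The main technical subtlety is the bookkeeping of the scalar $-\alpha$ corrections in (c): none is individually self-adjoint, but they group with their h.c.\ counterparts exactly into the controllable $-4\alpha(K-K^*)$.
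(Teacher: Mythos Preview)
Your argument is correct and, for the first inequality \eqref{eq:lmaux}, essentially identical to the paper's. For \eqref{eq:lmaux2} your approach differs from the paper's in a useful way: you identify $G=d\Gamma(M)$ as a second-quantized one-body operator and compute the mismatch $\partial_t(\cdot)-[G,\cdot]$ factor by factor, so that the cancellation of the dangerous (unnormalized-$a^\sharp(\ph_t)$) contributions is built into the algebraic identity $[G,\cN_{\bot\ph_t}]=\partial_t\cN_{\bot\ph_t}$. The paper instead computes $\partial_t\cN_{\text{ren}}$ and the commutator $[G,\cN_{\text{ren}}]$ separately, extracts by inspection the terms carrying an unnormalized $a^\sharp(\ph_t)$ factor (the only potentially large ones, all others being controlled by $Ce^{C|t|}(\cN_{\text{ren}}+1)$ directly), and then verifies that these leading pieces match. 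Your route is more systematic and makes the mechanism of cancellation transparent; the paper's is more hands-on but arrives at the same place.

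One small slip that does not affect the outcome: since $P_t\partial_t\ph_t=\langle\ph_t,\partial_t\ph_t\rangle\ph_t=\bar\alpha\,\ph_t=-\alpha\,\ph_t$, the correct decomposition is $\partial_t\ph_t=-\alpha\,\ph_t+Q_t\partial_t\ph_t$, and the mismatch for $a(\ph_t)$ is $+\alpha\,a(\ph_t)$ rather than $-\alpha\,a(\ph_t)$. With the correct signs the four scalar contributions in your item (c) cancel pairwise, so actually $E_c=0$; your bound for $E_c+E_c^*$ is then trivially satisfied.
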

\begin{proof} 
We recall that 
		\[\begin{split}
		\cN_{\text{ren}} & = \cN_{\bot \ph_t}  + \int dx dy \,  \Big(  k_t (x,y)   a^*(Q_{t,x}) a^*(Q_{t,y}) \frac{a(\ph_t)}{\sqrt{N} }\frac{a(\ph_t)}{\sqrt N}  + \text{h.c.}\Big).
		\end{split}\]
By Lemma \ref{lm:ker}, we have that $\sup_{t\in\bR}  \| k_{t}\|\leq C r^{1/2} $. If we combine this with the trivial bound  $0\leq  a^*(\ph_t)a(\ph_t)\leq N$ and the operator bounds of Lemma \ref{lm:fock}, we obtain  
		\[  \cN_{\bot \ph_t}  (1 - C r^{1/2}) - Cr^{1/2}   \leq   \cN_{\text{ren}} \leq  \cN_{\bot \ph_t}  (1+ C r^{1/2})    + Cr^{1/2}  \]
for some $C>0$ independent of $r>0$ and $t\in \bR$. The bound for $\cQ_\text{ren}$ follows similarly.  

To prove \eqref{eq:lmaux2}, we first analyze $ \partial_t \,\cN_{\text{ren}}$, based on the above decomposition of $\cN_\text{ren}$. Using \eqref{lm:aux} and the bounds in Lemmas \ref{lm:fock} and \ref{lm:ker}, observe that all operators occurring in $ \partial_t \,\cN_{\text{ren}}$ that only contain the fields $ a^\sharp(Q_{t,x})$ or normalized factors $ a^\sharp(\ph_t)/ \sqrt{N}$, $a^\sharp(\partial_t \ph_t)/ \sqrt{N}$ can be bounded by $   Ce^{C|t|} (\cN_{\text {ren}} + 1)$. The remaining contributions must contain at least one factor $ a^\sharp (\ph_t)$ (without the $1/{\sqrt N}$ normalization). Using that
		\be\label{eq:dtQ} 
		\begin{split} \partial_t Q_t  &=  - \big( | \ph_t \rangle \langle Q_t  \partial_t \ph_t | +\text{h.c.}\big) + 2\text{Re}\langle\partial_t \ph_t,\ph_t\rangle | \ph_t \rangle \langle  \ph_t | = - \big( | \ph_t \rangle \langle Q_t  \partial_t \ph_t | +\text{h.c.}\big), 
		\end{split} \ee
we thus find 
		\[\begin{split}
		\partial_t \, \cN_{\text{ren}}  & = - 2  \int dx dy \,  \Big(  \overline{ Q_t \partial_t \ph_t}(x) k_t (x,y)   a^*(\ph_t) a^*(Q_{t,y}) \frac{a(\ph_t)}{\sqrt{N} }\frac{a(\ph_t)}{\sqrt N}  + \text{h.c.}\Big)  \\
		&\hspace{0.5cm}-\big( a^*(\ph_t) a(\partial_t \ph_t) +\text{h.c.}\big) + \cE_1, 
		\end{split}\]
up to an error $\cE_1$ bounded by $ \pm \cE_1 \leq Ce^{C|t|} (\cN_{\text{ren}} + 1)$. We proceed in the same way to extract the main contributions to the commutator on the \lhs in \eqref{eq:lmaux2}, using that
		\[\begin{split}
		[\cN_{\bot\ph_t}, a^*(Q_{t,x}) a(\ph_t) ] &=  a^*(Q_{t,x}) a(\ph_t), \;\; [\cN_{\bot\ph_t}, a^*(\ph_t) a(\ph_t)  ]  = 0  
		\end{split}\]
Then, the same argument as above yields
		\[\begin{split}
		& \big[  -  \big(a^*(\ph_t) a(Q_t \partial_t \ph_t)-\text{h.c.}\big)  + \langle \partial_t \ph_t,\ph_t\rangle\cN_{\bot \ph_t}, \cN_{\text{ren}}\big]\\
		& =  - 2  \int dx dy \,  \Big(  \overline{ Q_t\partial_t \ph_t}(y) k_t (x,y)   a^*(\ph_t) a^*(Q_{t,y}) \frac{a(\ph_t)}{\sqrt{N} }\frac{a(\ph_t)}{\sqrt N}  + \text{h.c.}\Big)  \\
		&\hspace{0.5cm}-\big( a^*(\ph_t) a(Q_t\partial_t \ph_t) +\text{h.c.}\big) + \cE_2
		\end{split}\]
up to an error $ \pm \cE_2 \leq Ce^{C|t|} (\cN_{\text{ren}} + 1)$. Comparing this with $\partial_t \, \cN_{\text{ren}}$ and using that 
		\[ a^*(\ph_t) a(\partial_t \ph_t) + a^*(\partial_t \ph_t) a(\ph_t) =  a^*(\ph_t) a( Q_t \partial_t \ph_t) +a^*(Q_t\partial_t \ph_t) a(\ph_t), \] 
which follows from $ \text{Re}\langle \partial_t\ph_t, \ph_t\rangle =0 $ by mass conservation, this proves the first bound in \eqref{eq:lmaux2}. For the analogous bound on $\cQ_\text{ren}$, we proceed in the same way and find 
		\[\begin{split}
		\partial_t \cQ_\text{ren}&\!   = -  \int dx dy \,  \Big(  \overline{ Q_t\partial_t \ph_t}(x) i\partial_t k_t (x,y)   a^*(\ph_t) a^*(Q_{t,y}) \frac{a(\ph_t)}{\sqrt{N} }\frac{a(\ph_t)}{\sqrt N}  + \text{h.c.}\Big) +\cE_3\\
		& =  \big[  -  \big(a^*(\ph_t) a(Q_t \partial_t \ph_t)-\text{h.c.}\big)  + \langle \partial_t \ph_t,\ph_t\rangle\cN_{\bot \ph_t}, \cQ_{\text{ren}} \big] + \cE_4,
		\end{split}\]
up to errors $ \cE_3, \cE_4$ bounded by $ \pm \cE_3\leq Ce^{C|t|} (\cN_{\text{ren}} + 1)$, $\pm \cE_4\leq Ce^{C|t|} (\cN_{\text{ren}} + 1)$.
\end{proof}

The next lemma is the first of two key ingredients in the proof of Proposition \ref{prop:aux}. It compares the operator $\cH_{N}$, defined in \eqref{def:cHN}, to a renormalized Hamiltonian $ \cH_{\text{ren}}$ which equals the sum of the kinetic and potential energies of orthogonal excitations relative to renormalized field operators $ b_{x}, b^*_{y}$, which are defined by 
		\be\begin{split} \label{def:newflds} 
		b_{x} &=  a(Q_{t,x}) +  \int  dz\,  (Q_t\otimes Q_t k_t) (x,z) \,a^*_z \,\frac{a(\ph_t)}{\sqrt{N} }\frac{a(\ph_t)}{\sqrt N}, \\
		 b^*_{y} &=  a^*(Q_{t,y}) +  \frac{a^*(\ph_t)}{\sqrt{N} }\frac{a^*(\ph_t)}{\sqrt N} \int  dz\, \overline{(Q_t\otimes Q_t k_t)} (y ,z) \,a_z. 
		\end{split} \ee
Note that this is analogous to \cite[Eq. (11)]{Br}. In terms of these new fields, we set
		\be \label{def:ren1}  \begin{split}
		\cK_{\text{ren}} & = \int dx\, b^*_{x} (-\Delta_x) b_{x}, \;\;
		\cV_{\text{ren}}  =  \frac12 \int dx dy\, N^2V(N(x-y))b^*_{x}a^*(Q_{t,y}) a(Q_{t,y})b_{x}
		\end{split}\ee
as well as $ \cH_{\text{ren}} = \cK_{\text{ren}}  + \cV_{\text{ren}}$. Note that $\cH_{\text{ren}}\geq 0$ since both $ \cK_{\text{ren}}\geq 0$ and $ \cV_{\text{ren}}\geq0$. Note, moreover, that $ \cN_\text{ren}$ equals $ \int dx\, b^*_x b_x$, up to a correction which is quadratic in $k_t$. 

\begin{lemma} \label{lm:dec}
The operator $\cH_{N}$, defined in \eqref{def:cHN}, satisfies  
		\be \label{eq:cHNbnd} \frac12 \cH_{\emph{ren}} - Ce^{C|t|} (\cN_{\emph{ren}} +1)  \leq  \cH_{N} \leq 2\, \cH_{\emph{ren}} + Ce^{C|t|} (\cN_{\emph{ren}} +1). \ee
Moreover, we have that 
		\be \label{eq:cmcHNbnd} \begin{split}
		\pm  \big[ i \cH_N, \cN_{\emph{ren}}\big] &\leq  C\,\cH_{\emph{ren}}  + Ce^{C|t|} (\cN_{\emph{ren}} +1).  
		\end{split} \ee
\end{lemma}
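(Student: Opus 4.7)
The plan is to establish both bounds by expanding $\cH_{\text{ren}}$ in terms of the original creation and annihilation operators via the substitution $b_x = a(Q_{t,x}) + T_x$, where $T_x = \int dz\, (Q_t \otimes Q_t k_t)(x,z)\, a_z^*\, a(\ph_t)^2/N$, and then matching the resulting leading contributions against those of $\cH_N$ using the zero-energy scattering equation \eqref{eq:0en} for $f$, exactly in the spirit of \cite{Br, BBCO} in the spectral setting.

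For the first bound, I would expand
\[ \cK_{\emph{ren}} = \int dx\, a^*(Q_{t,x})(-\Delta_x) a(Q_{t,x}) + 2\emph{Re}\int dx\, a^*(Q_{t,x})(-\Delta_x) T_x + \int dx\, T_x^*(-\Delta_x) T_x, \]
and similarly expand $\cV_{\emph{ren}}$ using $b_x = a(Q_{t,x}) + T_x$. The first piece of $\cK_{\emph{ren}}$ is the kinetic energy of excitations, and the purely quadratic-in-$k_t$ pieces are bounded by $\|k_t\|_{H^1}^2 \lesssim 1$ times operators controlled by $\cN_{\emph{ren}}$, using the Fock space bounds of Appendix \ref{app:fock} and the kernel bounds of Appendix \ref{app:kern}. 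The crucial step is the cross term: moving $-\Delta_x$ onto $k_t$ and using that $-2\Delta f + V f = 0$, one obtains $-\Delta_x k_t(x,y) \approx \frac{1}{2} N f(N(x-y)) V_N(x-y) \chi(x-y) \ph_t(x) \ph_t(y)$ modulo derivatives landing on $\chi$ or $\ph_t$, which are controllable by $\|\ph_t\|_{H^2}$ and hence by $e^{C|t|}$ through Prop.\ \ref{prop:GPdyn}. The resulting "effective interaction" term combines with the cross-expansion of $\cV_{\emph{ren}}$ to reproduce (up to errors controlled by $\cN_{\emph{ren}}$) the full $\frac{1}{2}V_N$ interaction of $H_N$, together with the off-diagonal $a^* a^* a(\ph_t) a(\ph_t)$ contribution whose kernel is essentially $N V_N f \ph_t \ph_t$.

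The next step is to match the remaining pieces to $\cH_N$. The diagonal $a^*(\ph_t)a(\ph_t)$ contribution, together with the scattering-length identification $\frac{N}{2}\langle V_N f, |\ph_t|^2 \rangle \to 4\pi\mathfrak{a}\|\ph_t\|_4^2$ (Appendix \ref{app:kern}), produces precisely the shift $-N e_{\emph{GP}}$. Using the GP equation $i\partial_t \ph_t = -\Delta \ph_t + 8\pi\mathfrak{a}|\ph_t|^2 \ph_t$, the surviving linear-in-$a^*(\ph_t)a(\ph_t)$ pieces reorganize into $-(a^*(\ph_t) a(Q_t i\partial_t \ph_t)+\emph{h.c.}) + \langle i\partial_t \ph_t, \ph_t\rangle \cN_{\bot \ph_t}$, matching the structure in \eqref{def:cHN}. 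The remainders carry only $k_t$ factors with $L^2$-type norms bounded uniformly in $N$ but with coefficients growing at most like $e^{C|t|}$, and after one Cauchy--Schwarz each of them is absorbed either into $\frac12 \cH_{\emph{ren}}$ or into $Ce^{C|t|}(\cN_{\emph{ren}}+1)$, yielding the two-sided bound \eqref{eq:cHNbnd}.

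For the commutator bound \eqref{eq:cmcHNbnd}, I would write $\cN_{\emph{ren}} = \cN_{\bot \ph_t} + K_t + K_t^*$ with $K_t = \int k_t(x,y) a^*(Q_{t,x}) a^*(Q_{t,y}) a(\ph_t)^2/N$ and compute $[iH_N, \cN_{\emph{ren}}]$ directly. The commutator $[H_N, \cN_{\bot \ph_t}]$ produces off-diagonal pieces from both the kinetic and the interaction term of $H_N$; the leading singular parts of these are cancelled by $[-\sum_j \Delta_j + \frac{1}{2}\sum_{i\neq j} V_N, K_t + K_t^*]$, again via the scattering equation identity $-2\Delta k_t \approx N V_N f \ph_t\ph_t$. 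The remaining pieces are paired, via Cauchy--Schwarz, with one factor of $\sqrt{\cK_{\emph{ren}}}$ or $\sqrt{\cV_{\emph{ren}}}$ and one factor controllable by $\sqrt{\cN_{\emph{ren}}+1}$, giving a bound of the form $C\cH_{\emph{ren}} + Ce^{C|t|}(\cN_{\emph{ren}}+1)$ as required. The commutators of $\cN_{\emph{ren}}$ with the lower-order pieces $-(a^*(\ph_t)a(Q_t i\partial_t\ph_t) + \emph{h.c.}) + \langle i\partial_t \ph_t, \ph_t\rangle \cN_{\bot\ph_t}$ of $\cH_N$ are directly of size $Ce^{C|t|}(\cN_{\emph{ren}}+1)$ by Lemma \ref{lm:aux} and Prop.\ \ref{prop:GPdyn}.

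The main obstacle I expect is the careful bookkeeping in the first bound: one must systematically isolate every term arising from the expansion of $\cH_{\emph{ren}}$ and verify that, after scattering-equation cancellation, every remainder is indeed controllable by $\cN_{\emph{ren}}+1$ (and not something larger like $\cK_{\emph{ren}}^{1/2} \cN_{\emph{ren}}^{1/2}$ without a matching positive piece to absorb it). This requires exploiting the projections $Q_t$ inside $T_x$ and the $H^2/H^4$ regularity of $\ph_t$ from Prop.\ \ref{prop:GPdyn} in several places, and is the step where the time-dependent factor $e^{C|t|}$ appears.
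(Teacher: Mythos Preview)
Your approach to the first bound \eqref{eq:cHNbnd} is essentially the paper's: you expand $\cH_{\text{ren}}$ via $b_x = a(Q_{t,x}) + T_x$ and match against $\cH_N$, while the paper expands $\cH_N$ into the pieces $\cH_N^{(0)},\ldots,\cH_N^{(4)}$ and then rewrites $a(Q_{t,x}) = b_x - T_x$ to extract $\cK_{\text{ren}}$ and $\cV_{\text{ren}}$; this is the same algebra run in opposite directions, with the zero-energy scattering equation \eqref{eq:0en} furnishing the key cancellation in both.

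For the commutator bound \eqref{eq:cmcHNbnd} you take a genuinely different route. You propose to compute $[iH_N, \cN_{\text{ren}}]$ directly by splitting $\cN_{\text{ren}} = \cN_{\bot\ph_t} + K_t + K_t^*$ and exhibiting a fresh scattering-equation cancellation between the singular pieces of $[H_N, \cN_{\bot\ph_t}]$ and those of $[H_N, K_t + K_t^*]$. The paper instead reuses the decomposition $\cH_N = \cH_{\text{ren}} + \cE_{\cH_N}$ obtained in the first part and computes $[i\cH_{\text{ren}}, \cN_{\text{ren}}]$ and $[i\cE_{\cH_N}, \cN_{\text{ren}}]$ separately. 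The key observation there is the structural identity
\[
[b_x, \cN_{\text{ren}}] = b_x + (\text{terms quadratic in } Q_t\otimes Q_t k_t),
\]
which makes $[i\cK_{\text{ren}}, \cN_{\text{ren}}]$ and $[i\cV_{\text{ren}}, \cN_{\text{ren}}]$ vanish to leading order without any further appeal to \eqref{eq:0en}; the scattering-equation work done for \eqref{eq:cHNbnd} is thus recycled rather than repeated. Your direct computation is correct in principle and leads to the same estimate, but it forces you to redo the $-2\Delta k_t \approx N^3(Vf)(N\cdot)\ph_t\ph_t$ bookkeeping a second time and to track several additional cubic and quartic cross terms (from $[\tfrac12\sum V_N, K_t]$) that the paper's $b$-field formulation absorbs automatically.
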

\begin{proof}
We begin with the operator bounds in \eqref{eq:cHNbnd}. The proof consists essentially of two main steps. First, we split $\cH_{N}$ into several parts according to condensate and orthogonal excitation contributions to the energy. In terms of the $ a_x, a^*_y$, the Hamiltonian $H_N$ reads
		\[H_N = \int dx\, a_x^* (-\Delta_x) a_x + \frac12 \int dxdy\, N^2V(N(x-y)) a^*_x a^*_y a_x a_y.  \]
We split $ a_x = a(Q_{t,x}) + \ph_t(x) a(\ph_t)$, $ a^*_y = a^*(Q_{t,y}) + \overline\ph_t(y) a^*(\ph_t)$, insert this into $H_N$ and then expand $\cH_{N}$ into the sum $ \cH_{N} = \sum_{j=0}^4 \cH_{N}^{(j)}$, where 
		\be\begin{split} \label{eq:HNdec}
		\cH_{N}^{(0)}&= \frac N2  \big\langle \ph_t, (N^3V(N.) \ast|\ph_t|^2) \ph_t \big\rangle  \frac{a^*(\ph_t)}{\sqrt{N}}\frac{a^*(\ph_t)}{\sqrt{N}} \frac{a(\ph_t)}{\sqrt{N}}\frac{a(\ph_t)}{\sqrt{N}} -Ne_{\text{GP}} \\
		&\hspace{0.4cm} +  N \langle \ph_t, -\Delta \ph_t\rangle \frac{a^*(\ph_t)}{\sqrt{N}} \frac{a(\ph_t)}{\sqrt{N}} +  \langle i\partial_t \ph_t,\ph_t\rangle\cN_{\bot \ph_t}  \\
		\cH_{N}^{(1)}&=  a^*(\ph_t) a \big(Q_t ( N^3V (N.) \ast|\ph_t|^2) \ph_t) \big)-   a^*(\ph_t) a \big(Q_t  ( 8\pi \mathfrak{a} |\ph_t|^2 \ph_t) \big)\\
		&\hspace{0.5cm} -  \frac{a^*(\ph_t)}{\sqrt{N}} a \big(Q_t ( N^3V(N.) \ast|\ph_t|^2) \ph_t) \big) \frac{\cN_{\bot \ph_t} }{\sqrt{N}} +\text{h.c.}, \\
		\cH_{N}^{(2)} &= \int dx\, a^*(Q_{t,x}) (-\Delta_x) a(Q_{t,x}) \\
				  &\hspace{0.3cm} +\int dx dy  \, N^3 V(N(x-y)) |\ph_t (y)|^2   a^*(Q_{t,x}) a (Q_{t,x}) \frac{a^*(\ph_t)}{\sqrt{N}}\frac{a(\ph_t)}{\sqrt{N}} \\
				  &\hspace{0.3cm}+\int dx dy  \, N^3 V(N(x-y)) \ph_t (x) \overline{\ph}_t (y) a^*(Q_{t,x}) a(Q_{t,y})  \frac{a^*(\ph_t)}{\sqrt{N}}\frac{a(\ph_t)}{\sqrt{N}} \\
    				  &\hspace{0.3cm} +\!   \frac12\int \!dx dy\, N^3V(N(x-y)) \!\Big( \ph_t (x)\ph_t(y) a^*(Q_{t,x}) a^*(Q_{t,y}) \frac{a(\ph_t)}{\sqrt{N}}\frac{a(\ph_t)}{\sqrt{N}}\!+\!\text{h.c.} \!\Big), \\
		\cH_{N}^{(3)} &= \int dx dy \, N^{5/2} V(N(x-y)) \ph_t (y) a^*(Q_{t,x}) a^*(Q_{t,y})  a(Q_{t,x}) \frac{a(\ph_t)}{\sqrt{N}} + \text{h.c.} , \\
		\cH_{N}^{(4)} &= \frac1{2}\int dx dy\, N^2V(N(x-y)) a^*(Q_{t,x}) a^*(Q_{t,y}) a(Q_{t,y}) a(Q_{t,x}) . \\
		\end{split} \ee
Here, we normalized the $a^\sharp (\ph_t)$ by a factor $\sqrt{N}$, anticipating that $ \langle a^*(\ph_t)a(\ph_t)\rangle_{\psi_{N,t}}\approx N$. 
				
In the second step, we extract $ \cK_{\text{ren}}$ and $\cV_{\text{ren}}$ from $\cH_{N}$, up to errors controlled by $ \cH_{\text{ren}}$ and $\cN_{\text{ren}}$. The error estimates are mostly straightforward applications of Cauchy-Schwarz in combination with the results of Appendices \ref{app:fock} and \ref{app:kern}. Below, we outline the key steps since most of the bounds have already been explained at length in \eg \cite{BDS, BCS, BS}. 

Now, as shown below, the main contributions to $ \cH_{N}^{(0)} $ and $ \cH_{N}^{(1)}$ are cancelled so let us switch directly to $\cH_{N}^{(2)} $ which contains $\cK_{\text{ren}}$. Abbreviating in the following 
		$$ j_x(\cdot) = j(x, \cdot) =  j(\cdot, x)  $$ 
for symmetric $ j\in L^2(\bR^3\times \bR^3)$, we rewrite 
		\[\begin{split}
		&\int dx\, a^*(Q_{t,x}) (-\Delta_x) a(Q_{t,x}) \\
		& =  \int dx\, \Big(b^*_{x} - \frac{a^*(\ph_t)}{\sqrt{N} }\frac{a^*(\ph_t)}{\sqrt N} a\big(( Q_t\otimes Q_t k_t)_{x}) \big) \Big) \\
		&\hspace{1.2cm} \times(-\Delta_x) \Big(b_{x} -  a^*\big(( Q_t\otimes Q_t k_t)_{x}) \big)  \frac{a(\ph_t)}{\sqrt{N} }\frac{a(\ph_t)}{\sqrt N}\Big)\\
		& = \cK_{\text{ren}} + \int dx dy\, \Big( (\Delta_x( Q_t\otimes Q_t k_t)_{x})(y) \,b^*_{x}  a^*(Q_{t,y})  \frac{a(\ph_t)}{\sqrt{N} }\frac{a(\ph_t)}{\sqrt N} +\text{h.c.}\Big)\\
		&\hspace{0.5cm} + \int dx\,  \big\langle ( Q_t\otimes Q_t k_t)_{x} , -\Delta_x ( Q_t\otimes Q_t k_t)_{x} ) \big\rangle  \frac{a^*(\ph_t)}{\sqrt{N}}\frac{a^*(\ph_t)}{\sqrt{N}} \frac{a(\ph_t)}{\sqrt{N}}\frac{a(\ph_t)}{\sqrt{N}} + \cE_1   
		\end{split} \]
for an error $\cE_1\geq 0$ which is bounded by 
		\[\begin{split}
		 \cE_1 &= \int dx\,  \frac{a^*(\ph_t)}{\sqrt{N} }\frac{a^*(\ph_t)}{\sqrt N} a^*(\nabla_x( Q_t\otimes Q_t k_t)_{x})    a(\nabla_x ( Q_t\otimes Q_t k_t)_{x})  \frac{a(\ph_t)}{\sqrt{N} }\frac{a(\ph_t)}{\sqrt N} \\
		 &\leq   \int dx\, a^*(\nabla_x( Q_t\otimes Q_t k_t)_{x}) a(\nabla_x( Q_t\otimes Q_t k_t)_{x}) = \!\int dx dy\,  g_t(x,y) a^*(Q_{t,x}) a(Q_{t,y})\\
		 & \leq Ce^{C|t|} (\cN_{\text{ren}} +1).
		\end{split} \]
Here, we used Lemma \ref{lm:aux}, Lemma \ref{lm:fock} and Lemma \ref{lm:ker}, implying $\|g_t\|\leq C$ for  
		\[\begin{split}
		g_t(x,y) &= \int dz\, (\nabla_2 Q_t\otimes Q_t k_t) (x,z) (\nabla_2 \overline{Q_t\otimes Q_t k_t}) (y, z).
		\end{split}\]
By Lemma \ref{lm:ker}, we also find
		\[\begin{split} \int dxdy\, \Big|& (\Delta_1k_t)(x,y) + \frac12 N^3(Vf)(N(x-y)) \ph_t(x)\ph(y) \\
		& +2 N^2\big(\nabla (1-f)\big)(N(x-y))\cdot \nabla \ph_t(x)\ph_t(y)\chi(x-y) \Big|^2\leq  C,  \end{split}\]
and this can be used to show that
		\[\begin{split}
		&\int dx\, \Big( (\Delta_x( Q_t\otimes Q_t k_t)_{x})(y) \,b^*_{x}  a^*(Q_{t,y})   \frac{a(\ph_t)}{\sqrt{N} }\frac{a(\ph_t)}{\sqrt N} +\text{h.c.}\Big)\\
		& = -\frac12 \int dx\, \Big(  N^3(Vf)(N(x-y)) \ph_t(x)\ph(y) b^*_{x}  a^*(Q_{t,y} )  \frac{a(\ph_t)}{\sqrt{N} }\frac{a(\ph_t)}{\sqrt N} +\text{h.c.}\Big) + \cE_{2}
		\end{split}\]
for an error $\cE_{2}$ which, for every $\delta>0$ and some $C>0$, is controlled by 
		\[\pm \cE_{2} \leq  \delta \cK_{\text{ren}} +  C \delta^{-1} e^{C|t|} (\cN_{\text{ren}} +1).\]
Here we used that
		\[\begin{split} 
		& \pm \Big( \int dx dy\, N^2(\nabla (1-f)\big)(N(x-y))\cdot \nabla \ph_t(x)\ph_t(y)\chi(x-y)  \,b^*_{x}  a^*(Q_{t,y} )  \frac{a(\ph_t)}{\sqrt{N} }\frac{a(\ph_t)}{\sqrt N}  \\
		&    \hspace{0.7cm} + \int dx dy\, N (1-f)(N(x-y)) \nabla \ph_t(x)\ph_t(y)\chi(x-y)  \cdot \nabla_x b^*_{x}  a^*(Q_{t,y} )  \frac{a(\ph_t)}{\sqrt{N} }\frac{a(\ph_t)}{\sqrt N} \Big)\\
		& \leq C e^{C|t|} (\cN_{\text{ren}} +1)
		\end{split}\]
by integration by parts, Cauchy-Schwarz and Lemma \ref{lm:ker}, and that 
		\[\begin{split}
		&\Big|  \int dx dy\, N (1-f)(N(x-y)) \nabla \ph_t(x)\ph_t(y)\chi(x-y)  \cdot \langle \phi_N, \nabla_x b^*_{x}  a^*(Q_{t,y} )  \frac{a(\ph_t)}{\sqrt{N} }\frac{a(\ph_t)}{\sqrt N} \phi_N\rangle \Big|\\
		&\leq C \langle \cK_{\text{ren}}\rangle_{\phi_N}^{1/2} \langle \, \cN_{\text{ren}} +1 \rangle_{\phi_N}^{1/2}
		\end{split} \]
for every $\phi_N\in L^2_s(\bR^{3N})$. Finally, Lemma \ref{lm:ker} and $ a^*(\ph_t)a(\ph_t)= N-\cN_{\bot\ph_t}$ imply that
		\[\begin{split}
		& \int dx\,   \big\langle ( Q_t\otimes Q_t k_t)_{x} , -\Delta_x ( Q_t\otimes Q_t k_t)_{x} ) \big\rangle  \frac{a^*(\ph_t)}{\sqrt{N}}\frac{a^*(\ph_t)}{\sqrt{N}} \frac{a(\ph_t)}{\sqrt{N}}\frac{a(\ph_t)}{\sqrt{N}}\\ 
		& = \frac N2  \int dx dy \,  N^3\big(Vf (1-f)\big) (N(x-y)) |\ph_t(x)|^2 |\ph_t(y)|^2 + \cE_{3},
		\end{split} \]
where $ \pm \cE_{3}  \leq Ce^{C|t|} $. Combining all this with the simple estimates
		\[\begin{split}
		&\pm  \int dx dy  \, N^3 V(N(x-y)) |\ph_t (y)|^2   a^*(Q_{t,x}) a (Q_{t,x}) \frac{a^*(\ph_t)}{\sqrt{N}}\frac{a(\ph_t)}{\sqrt{N}} \leq Ce^{C|t|} (\cN_{\text{ren}} +1),\\
		&\pm\int dx dy  \, N^3 V(N(x-y)) \ph_t (x) \overline{\ph}_t (y) a^*(Q_{t,x}) a(Q_{t,y})  \frac{a^*(\ph_t)}{\sqrt{N}}\frac{a(\ph_t)}{\sqrt{N}}  \Big)\leq Ce^{C|t|} (\cN_{\text{ren}} +1),
		\end{split}\]
which follow from Cauchy-Schwarz and Lemma \ref{lm:aux}, and the fact that 
		\[\begin{split}
		  &\frac12\int \!dx dy\, N^3V(N(x-y)) \Big( \ph_t (x)\ph_t(y) a^*(Q_{t,x}) a^*(Q_{t,y}) \frac{a(\ph_t)}{\sqrt{N}}\frac{a(\ph_t)}{\sqrt{N}}\!+\!\text{h.c.} \!\Big) \\
		  & = \frac12\int \!dx dy\, N^3V(N(x-y)) \Big( \ph_t (x)\ph_t(y) b^*_{x} a^*(Q_{t,y}) \frac{a(\ph_t)}{\sqrt{N}}\frac{a(\ph_t)}{\sqrt{N}}\!+\!\text{h.c.} \!\Big)\\ 
		  &\hspace{0.5cm} - N  \int dx dy \,  N^3 V(1-f) (N(x-y)) |\ph_t(x)|^2 |\ph_t(y)|^2 \,  + \cE_{4}
		 \end{split}\]
for an error $ \pm \cE_{4} \leq Ce^{C|t|} (\cN_{\text{ren}} +1)$, which can be proved as above, we arrive at  
		\be \label{eq:cHN2}
		\begin{split}
		&\cH_{N}^{(2)}-\cK_{\text{ren}} \\
		& = \frac12 \int dx\, \Big(  N^3V(1-f)(N(x-y)) \ph_t(x)\ph(y) b^*_{x}  a^*(Q_{t,y} )  \frac{a(\ph_t)}{\sqrt{N} }\frac{a(\ph_t)}{\sqrt N}\! +\!\text{h.c.}\Big)\\
		&\hspace{0.3cm}   + \frac N2 \int dx dy \,  N^3 \big( Vf (1-f) - 2 V(1-f) \big) (N(x-y)) |\ph_t(x)|^2 |\ph_t(y)|^2 + \cE_{\cH_{N}^{(2)}} , 
		\end{split}
		\ee
where $\pm \cE_{\cH_{N}^{(2)}}\leq  \delta \cK_{\text{ren}} + \delta^{-1}C  e^{C|t|} (\cN_{\text{ren}} +1)$.
		
In the next step, we extract $\cV_{\text{ren}}$ from $ \cH_{N}^{(4)}$. Here, we first rewrite
		\be\label{eq:aux1}\begin{split}
		a^*(Q_{t,x})a^*(Q_{t,y})& = b^*_{x}a^*(Q_{t,y}) - \frac{a^*(\ph_t)}{\sqrt{N} }\frac{a^*(\ph_t)}{\sqrt N}   a^*(Q_{t,y}) a\big(( Q_t\otimes Q_t k_t)_{x}) \big) \\
		&\hspace{0.5cm} - \frac{a^*(\ph_t)}{\sqrt{N} }\frac{a^*(\ph_t)}{\sqrt N} \, \overline{ Q_t\otimes Q_t k_t}(x,y)  
		\end{split}\ee
Inserting this into $\cH_{N}^{(4)}$ yields with similar arguments as above the decomposition		
		\be \label{eq:cHN4}\begin{split}
		&\cH_{N}^{(4)} -\cV_{\text{ren}} \\
		&=  -\frac12  \int \!dx dy\, N^3 V(1-f) (N(x-y))  \Big( \ph_t(x)\ph_t(y)  b^*_{x}a^*(Q_{t,y})\frac{a(\ph_t)}{\sqrt{N} }\frac{a(\ph_t)}{\sqrt N} +\text{h.c.}\Big)\\
		&\hspace{0.4cm} +   \frac N2 \int dx dy \,  N^3 V(1-f)^2   (N(x-y)) |\ph_t(x)|^2 |\ph_t(y)|^2 + \cE_{\cH_{N}^{(4)}}
		\end{split}\ee	
for an error $  \cE_{\cH_{N}^{(4)}}$ which is controlled by $\pm \cE_{\cH_{N}^{(4)}}\leq  \delta \cV_{\text{ren}} + \delta^{-1}C  e^{C|t|} (\cN_{\text{ren}} +1)$ for every $ \delta>0$ and some constant $C>0$. This follows indeed from the same arguments as above, using Lemma \ref{lm:aux}, Lemma \ref{lm:fock} and Lemma \ref{lm:ker}, together with the simple bound
		\[\begin{split}
		& \Big|  \int dx dy\, N^2 V  (N(x-y))    \ph_t(x)\ph_t(y)   \langle \phi_N, b^*_{x}a^*(Q_{t,y})  a^*\big(( Q_t\otimes Q_t k_t)_{x}) \big)a(Q_{t,y}) \phi_N \rangle\Big|\\
		& \leq  \| \ph_t\|^{3/2}_\infty \langle \cV_{\text{ren}}\rangle_{\phi_N}^{1/2} \Big(  \int dx dy\, N^2 V  (N(x-y))  \langle \phi_N, a^*(Q_{t,y}) \cN_{\bot \ph_t } a(Q_{t,y}) \phi_N \rangle   \Big)^{1/2} \\
		&\leq  C e^{C|t|}  \langle \cV_{\text{ren}}\rangle_{\phi_N}^{1/2}   \langle \,  \cN_{\bot\ph_t}+1\rangle_{\phi_N}^{1/2}  
		\end{split}\]
for every $\phi_N\in L_s^2(\bR^{3N})	$. Finally, inserting \eqref{eq:aux1} into $ \cH_{N}^{(3)}$, we obtain analogously that
		\be \label{eq:cHN3}
		\begin{split}
		\cH_{N}^{(3)} & =   -\Big(   a^*(\ph_t)  a\big (Q_{t} ( N^3 V(1-f)(N.)\ast |\ph_t|^2\ph_t )\big)   + \text{h.c.} \Big) \!+ \! \cE_{\cH_{N}^{(3)}}
		\end{split}\ee
for an error $  \cE_{\cH_{N}^{(3)}}$ controlled by $ \pm \cE_{\cH_{N}^{(3)}}\leq  \delta \cV_{\text{ren}} + \delta^{-1}C  e^{C|t|} (\cN_{\text{ren}} +1) $.

To conclude the proof, it now remains to combine the decompositions in \eqref{eq:cHN2},  \eqref{eq:cHN4},  \eqref{eq:cHN3} with $ \cH_{N}^{(0)}$ and $\cH_{N}^{(1)}$, defined in \eqref{eq:HNdec}. Before doing so, let us observe that 
		\[ \cH_{N}^{(1)} =  \Big( a^*(\ph_t)  a\big (Q_{t} ( N^3 V(1-f)(N.)\ast |\ph_t|^2\ph_t) \big)    +\text{h.c.}\Big) + \cE_{\cH_{N}^{(1)}} \]
for an error $  \cE_{\cH_{N}^{(1)}}$ controlled by $ \pm \cE_{\cH_{N}^{(1)}}\leq C e^{C|t|} (\cN_{\text{ren}} +1) $. This readily follows from the regularity of $ \ph_t$ (see Prop. \ref{prop:GPdyn}) and the identity $\|Vf\|_1 = 8\pi\mathfrak{a}$. Combining this observation with the the identity $a^*(\ph_t) a(\ph_t) = N-\cN_{\bot\ph_t}$, the fact that 
		\[ \langle \ph_t, -\Delta \ph_t\rangle = e_{\text{GP}}   - 4\pi\mathfrak{a} \| \ph_t \|^4_4  = e_{\text{GP}} - \frac12 \big\langle \ph_t, (N^3Vf(N.)\ast |\ph_t|^2)\ph_t\big\rangle +O(1) \]
and the decompositions \eqref{eq:cHN2},  \eqref{eq:cHN4} and \eqref{eq:cHN3}, we conclude that 
		\be\label{eq:cHNdec}\begin{split} 
		\cH_{N} & =    \cH_{\text{ren}}   + \cE_{\cH_{N}}, 
		\end{split} \ee
for an error $\pm  \cE_{\cH_{N}}\leq (1-\delta) \cH_{\text{ren}} + C\delta^{-1} e^{C|t|} (\cN_{\text{ren}} +1) $. Choosing $\delta=\frac12$ concludes \eqref{eq:cHNbnd}.

Let us now switch to the commutator estimate \eqref{eq:cmcHNbnd}. Based on the decomposition of $\cH_N$ in \eqref{eq:cHNdec}, it is useful to split this into two steps and to show separately that
		\be \label{eq:cmbnd1}  \pm  \big[ i \cH_{\text{ren}}, \cN_{\text{ren}}\big] \leq  C\cH_{\text{ren}}  + Ce^{C|t|} (\cN_{\text{ren}} +1) ,    \ee
and
		\be \label{eq:cmbnd2}  \pm  \big[ i \cE_{\cH_{N}}, \cN_{\text{ren}}\big]  \leq  C\cH_{\text{ren}}  + Ce^{C|t|} (\cN_{\text{ren}} +1).   \ee
Proving these bounds requires only a slight variation of the arguments used to derive \eqref{eq:cHNbnd}. We therefore focus on the key ideas for \eqref{eq:cmbnd1} and omit the details for \eqref{eq:cmbnd2}. 		

Let us start with $ [i\cK_\text{ren}, \cN_\text{ren}] $. The key identity that we need is 
		\[\begin{split}
		\big[ b_x, \cN_\text{ren} \big] &=  \big[ a(Q_{t,x}) + N^{-1}a^*\big( (Q_t\otimes Q_t k_t)_x \big) a(\ph_t)a(\ph_t), \cN_{\bot\ph_t} \big] \\
		&\hspace{0.5cm} + N^{-1}\int dz \, \big[ a(Q_{t,x}) ,     a^*\big( (Q_t\otimes Q_tk_t)_z \big) a^*_z a(\ph_t)a(\ph_t)  + \text{h.c.} \big],  \\
		&\hspace{0.5cm} + N^{-2}\int dz \, \big[   a^*\big( (Q_t\otimes Q_t k_t)_x \big) a(\ph_t)a(\ph_t),    a^*(\ph_t)a^*(\ph_t) a\big( (Q_t\otimes Q_tk_t)_z \big) a_z \big] ,  \\
		& = b_x - 2 N^{-2} \int dz\, \big \langle (Q_t\otimes Q_t k_t)_x, (Q_t\otimes Q_t k_t)_z \big \rangle  a_z a^*(\ph_t)a^*(\ph_t) a(\ph_t)a(\ph_t)  \\
		&\hspace{0.5cm} + 2 N^{-2}  a^*\big( (Q_t\otimes Q_t k_t)_x \big) \int dz\,  a\big( (Q_t\otimes Q_tk_t)_z \big) a_z  (2 a^*(\ph_t) a(\ph_t)+1 ). 
		\end{split}\]
Since $\big[ b_x^*, \cN_\text{ren} \big] = - \big[ b_x, \cN_\text{ren} \big]^*$, this implies that $\big[i \cK_\text{ren}, \cN_\text{ren}\big]$ vanishes up to corrections that are quadratic in the kernel $ Q_t\otimes Q_t k_t$. As shown already in the previous step, such correction terms only produce regular terms so that a similar analysis as for \eqref{eq:cHNbnd} implies that $ \pm  [ i \cK_{\text{ren}}, \cN_{\text{ren}}] \leq  C\,\cK_{\text{ren}}  + Ce^{C|t|} (\cN_{\text{ren}} +1)$; we omit the details. The same remarks apply to $ [ i \cV_\text{ren}, \cN_{\text{ren}}]$. Here, we use additionally the identity
		\[\begin{split}
		\big[ a^*(Q_{t,y}) a(Q_{t,y})   ,\cN_\text{ren} \big] &=   2  a^*(Q_{t,y}) a^*\big( (Q_t\otimes Q_t k_t)_y)\big) \frac{a(\ph_t)}{\sqrt{N}} \frac{a(\ph_t)}{\sqrt{N}}    - \text{h.c.} 
		\end{split}\]
Arguing as above, we find
		\[\begin{split}
		\big[  \cV_\text{ren}  ,\cN_\text{ren} \big]&=  \! \int dx dy \,N^2V(N(x-y)) b^*_{x}  \Big( a^*(Q_{t,y}) a^*\big( (Q_t\otimes Q_t k_t)_y)\big) \frac{(a(\ph_t))^2}{N}    - \text{h.c.} \Big) b_x\! \\
		&\hspace{0.5cm}+\!  \cE_{[  \cV_\text{ren}, \,\cN_\text{ren} ]} \\
		\end{split}\]
where $ \pm  i\cE_{[  \cV_\text{ren}, \,\cN_\text{ren} ]}  \leq C\,\cV_{\text{ren}}  + Ce^{C|t|} (\cN_{\text{ren}} +1) $. By Cauchy-Schwarz, we also have
		\[\begin{split} &\pm \Big( \int dx dy \, N^2V(N(x-y)) b^*_{x}   \Big( a^*(Q_{t,y}) a^*\big( (Q_t\otimes Q_t k_t)_y)\big) \frac{a(\ph_t)}{\sqrt{N}} \frac{a(\ph_t)}{\sqrt{N}}   + \text{h.c.} \Big) b_x \\
		&\;\leq  C\,\cV_{\text{ren}}  + Ce^{C|t|} (\cN_{\text{ren}} +1),
		\end{split}\]
so that all in all, we conclude that $\pm [ i \cV_\text{ren}, \cN_{\text{ren}}]\leq C\,\cV_{\text{ren}}  + Ce^{C|t|} (\cN_{\text{ren}} +1)$. 
\end{proof}	
		
The next lemma is our last ingredient needed for the proof of Proposition \ref{prop:aux}. It is similar to the previous Lemma \ref{lm:dec} and collects important properties related to $ \cQ_\text{ren}$. 
\begin{lemma} \label{lm:dec2}
Let $\cH_{N}$ be as in \eqref{def:cHN} and let $ \cQ_\emph{ren} $ be as in \eqref{def:Nren}. Then, for some constant $C>0$ and for every $t\in\bR$, we have that
		\[\begin{split}
		&\pm  \Big(    \big [ i H_N, -  \big(a^*(\ph_t) a(Q_t i\partial_t \ph_t)+\emph{h.c.}\big)  + \langle i\partial_t \ph_t,\ph_t\rangle\cN_{\bot \ph_t} + \cQ_\emph{ren}\big]   \\
		&\hspace{0.8cm} +    \partial_t \big( -  \big(a^*(\ph_t) a(Q_t i\partial_t \ph_t)+\emph{h.c.}\big)  + \langle i\partial_t \ph_t,\ph_t\rangle\cN_{\bot \ph_t} + \cQ_\emph{ren}  \big)\Big)  \\
		&\;\leq C  \cH_\emph{ren}   +  Ce^{C|t|}   (\cN_{\emph{ren}} +1).  
		\end{split}\]
\end{lemma}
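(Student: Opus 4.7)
The plan is to analyze the total Heisenberg time derivative of the operator $\mathcal{M}_t := A_t + B_t + \cQ_\text{ren}$, where $A_t := -(a^*(\ph_t) a(Q_t i\partial_t \ph_t) + \text{h.c.})$ and $B_t := \langle i\partial_t\ph_t, \ph_t\rangle\cN_{\bot\ph_t}$, by splitting it into the $\cQ_\text{ren}$ part and the $A_t+B_t$ part. The overall strategy mirrors that of Lemma \ref{lm:dec}: extract the leading contributions from each commutator by decomposing $H_N$ as in \eqref{eq:HNdec}, identify the cancellations coming from the Gross--Pitaevskii equation and from the approximate scattering relation for $k_t$ encoded in Appendix \ref{app:kern}, and bound the remainder by $\cH_\text{ren}$ and $(\cN_\text{ren}+1)$ via the operator bounds of Appendices \ref{app:fock} and \ref{app:kern}.

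For the $\cQ_\text{ren}$ part, I would first use the second inequality in \eqref{eq:lmaux2} of Lemma \ref{lm:aux} to rewrite $\partial_t \cQ_\text{ren}$ as the commutator of $-(a^*(\ph_t)a(Q_t\partial_t\ph_t)-\text{h.c.}) + \langle \partial_t\ph_t,\ph_t\rangle\cN_{\bot\ph_t}$ with $\cQ_\text{ren}$, up to an error of size $Ce^{C|t|}(\cN_\text{ren}+1)$. Then I would compute $[iH_N, \cQ_\text{ren}]$ by going through the pieces of \eqref{eq:HNdec}. The singular contributions from $[i\cH_N^{(2)},\cQ_\text{ren}]$ (where $-\Delta$ lands on one argument of $i\partial_t k_t$) and from $[i\cH_N^{(4)},\cQ_\text{ren}]$ (where $V_N$ contracts with $i\partial_t k_t$) combine via the scattering identity $-2\Delta k_t + N^3 V_N k_t \approx 0$ from Appendix \ref{app:kern}, leaving only terms that are quadratic in $k_t$ or controllable by $\cH_\text{ren}$, using the same Cauchy--Schwarz plus integration-by-parts scheme as in Lemma \ref{lm:dec}.

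For $A_t + B_t$, using the GP equation and the identity \eqref{eq:dtQ}, compute $\partial_t (A_t + B_t)$ explicitly and eliminate $\partial_t^2 \ph_t$ by differentiating the GP equation once more. The commutator $[iH_N, A_t + B_t]$ is then computed by distributing $H_N$ over the fields $a^*(\ph_t), a(Q_t i\partial_t\ph_t)$ and $\cN_{\bot\ph_t}$. The main cancellation takes place between the terms of the form $a^*(-\Delta \ph_t + (N^3 V_N \ast |\ph_t|^2)\ph_t)a(\cdots)$ produced by $[iH_N, a^*(\ph_t)]$ (or by $[iH_N, a(\cdot)]$) and the derivative of $Q_t i\partial_t\ph_t$. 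The mismatch between $(N^3 V_N \ast |\ph_t|^2)\ph_t$ and $8\pi\mathfrak{a}|\ph_t|^2 \ph_t$ --- the obstruction to naive cancellation --- is precisely compensated by the contributions from $[iH_N, \cQ_\text{ren}]$ via the scattering identity, in analogy with the static argument in Lemma \ref{lm:dec}.

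The main obstacle is the careful bookkeeping of many apparent $O(N)$ terms and verifying that each one either cancels with its counterpart from $\partial_t \mathcal{M}_t$, is rewritten via the approximate scattering equation for $k_t$, or can be absorbed into $\cH_\text{ren}$ by integration by parts followed by Cauchy--Schwarz (in the spirit of the $\cE_{\cH_N^{(j)}}$ estimates of Lemma \ref{lm:dec}). Once this accounting is complete, all remaining errors are systematically bounded by $C\cH_\text{ren} + Ce^{C|t|}(\cN_\text{ren}+1)$ using the Fock-space bounds of Appendix \ref{app:fock}, the kernel estimates of Appendix \ref{app:kern}, and the uniform-in-$t$ $H^4$-regularity of $\ph_t$ provided by Proposition \ref{prop:GPdyn}.
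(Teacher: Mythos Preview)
Your proposal is correct and follows essentially the same approach as the paper: reduce $\partial_t\cQ_\text{ren}$ via the second bound in \eqref{eq:lmaux2}, compute $[i\cH_N,\mathcal X]+\partial_t\mathcal X$ and $[i\cH_N,\cQ_\text{ren}]$ separately through the decomposition \eqref{eq:HNdec}, and exhibit the cancellation between the two resulting expressions (each produces a $(\|V\|_1-8\pi\mathfrak{a})$ term and a $b^*_x a^*(Q_{t,y})$ term that match with opposite signs, see \eqref{eq:lmdec21} and \eqref{eq:lmdec22}). One minor point: the relevant scattering identity is not $-2\Delta k_t+N^3V_Nk_t\approx 0$ but rather $-\Delta_1 k_t(x,y)\approx \tfrac12 N^3(Vf)(N(x-y))\ph_t(x)\ph_t(y)$ as in Lemma~\ref{lm:ker}~b), which is the form actually needed for the cancellation with the potential contributions.
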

\begin{proof}
The proof is based on the same ideas and operator bounds as Lemma \ref{lm:aux} and Lemma \ref{lm:dec}. For this reason, we only outline the key steps. Based on the second bound in \eqref{eq:lmaux2} of Lemma \ref{lm:aux}, we first observe that it suffices to prove that 
		\be\label{eq:lmdec2}\begin{split}
		&\pm  \Big(    \big [ i \cH_N, -  \big(a^*(\ph_t) a(Q_t i\partial_t \ph_t)+\text{h.c.}\big)  + \langle i\partial_t \ph_t,\ph_t\rangle\cN_{\bot \ph_t} + \cQ_\text{ren}\big]   \\
		&\hspace{0.8cm} +    \partial_t \big( -  \big(a^*(\ph_t) a(Q_t i\partial_t \ph_t)+\text{h.c.}\big)  + \langle i\partial_t \ph_t,\ph_t\rangle\cN_{\bot \ph_t}    \big)\Big)  \\
		&\;\leq C      \cH_\text{ren}   +  Ce^{C|t|}   (\cN_\text{ren} +1).  
		\end{split}\ee
Now, we have to compute the contributions on the \rhs explicitly and, in view of \eqref{eq:lmdec2}, it is enough to do this up to errors that are controlled by $ \cH_\text{ren}$ and $\cN_\text{ren}$. We first set   
		\[ \mathcal X  = -  \big(a^*(\ph_t) a(Q_t i\partial_t \ph_t)+\text{h.c.}\big)  + \langle i\partial_t \ph_t,\ph_t\rangle\cN_{\bot \ph_t}, \]  
and find with \eqref{eq:dtQ} that
		\be\label{eq:dtX}\begin{split}
		\partial_t \mathcal X 
		& = \Big(\langle i\partial_t \ph_t,\ph_t\rangle  a^*(\ph_t)a(Q_t\partial_t\ph_t) - a^*(\ph_t)a(Q_t i\partial_t^2\ph_t)  +\text{h.c.}\Big) + \cE_{\partial_t \mathcal X}, 
		\end{split} \ee
where, trivially, $  \pm \partial_t \mathcal X = \pm \big(\partial_t \langle i\partial_t \ph_t,\ph_t\rangle\big)  \cN_{\bot \ph_t}\leq Ce^{C|t| } \cN_\text{ren}  $, by Lemma \ref{lm:aux}. In the next step, a tedious, but straightforward computation shows that 
		\[\begin{split}  
		[i\cH_{N}^{(0)}, \mathcal X ] &=   \Big(  - \big( \| V \|_1 - 8\pi\mathfrak{a}\big)  \|\ph_t\|_4^4  \, a^*(\ph_t) a(Q_t \partial_t \ph_t)+\text{h.c.}\Big)  + \cE_0, \\
		[i\cH_{N}^{(1)}, \mathcal X ]&= N \big( \| V \|_1 - 8\pi\mathfrak{a}\big) \Big( \big\langle  |\ph_t|^2\ph_t, Q_t \partial_t \ph_t\big \rangle +\text{h.c.}\Big) \\
					&\hspace{0.35cm} + \langle \partial_t \ph_t,\ph_t\rangle \big( \| V \|_1 - 8\pi\mathfrak{a}\big)  \Big( a^*(\ph_t) a\big(Q_t  |\ph_t|^2 \ph_t\big) +\text{h.c.}\Big)    + \cE_1,\\
		[i\cH_{N}^{(2)}, \mathcal X ] &=  \Big( a^*(\ph_t) a (Q_t(-\Delta) ( Q_t \partial_t \ph_t) ) + 2 \| V \|_1a^*(\ph_t)  a\big(Q_t  |\ph_t|^2( Q_t \partial_t \ph_t) \big)  \\
					&\hspace{0.7cm}  + \| V \|_1 a^*(\ph_t) a \big( Q_t  \ph_t^2 \overline{ Q_t \partial_t\ph_t}\big) + \text{h.c.}\Big)\\
					&\hspace{0.4cm} - \int \!dx dy\, N^{5/2}V(N(x-y)) \\
					&\hspace{1.2cm}\times\Big( \ph_t (x)\ph_t(y) a^*(Q_{t,x}) a^*(Q_{t,y}) a(Q_t \partial_t \ph_t)  \frac{a(\ph_t)}{\sqrt{N}}\!+\!\text{h.c.} \!\Big)  \\
				  	&\hspace{0.4cm} - \langle \partial_t \ph_t,\ph_t\rangle \int \!dx dy\, N^3V(N(x-y)) \\
				  	& \hspace{2.5cm} \times  \Big( \ph_t (x)\ph_t(y) a^*(Q_{t,x}) a^*(Q_{t,y}) \frac{a(\ph_t)}{\sqrt{N}}\frac{a(\ph_t)}{\sqrt{N}} + \text{h.c.} \Big)   + \cE_2, \\
		[i\cH_{N}^{(3)}, \mathcal X ] &=  \int dx dy \, N^{3} V(N(x-y)) \\
		& \hspace{1cm}\times \Big(  (Q_{t}\partial_t \ph_t)(x)  \ph_t (y) a^*(Q_{t,x}) a^*(Q_{t,y})  \frac{a(\ph_t)}{\sqrt{N}}\frac{a(\ph_t)}{\sqrt{N}} + \text{h.c.}\Big) \\
		&\hspace{0.4cm}  -  \int dx dy \, N^{2} V(N(x-y)) \\
		& \hspace{1cm}\times \Big(    \ph_t (y) a^*(Q_{t,x}) a^*(Q_{t,y}) a(Q_{t,x})  a(Q_t \partial_t \ph_t)  + \text{h.c.}\Big) \\
		&\hspace{0.4cm}-\langle \partial_t \ph_t,\ph_t\rangle \int dx dy \, N^{5/2} V(N(x-y)) \\
		&\hspace{2.4cm}\times \Big( \ph_t (y) a^*(Q_{t,x}) a^*(Q_{t,y})  a(Q_{t,x}) \frac{a(\ph_t)}{\sqrt{N}} + \text{h.c.}\Big)+  \cE_3 , \\
		[i\cH_{N}^{(4)}, \mathcal X ] &=  \int dx dy\, N^{5/2}V(N(x-y))\\
		&\hspace{0.8cm} \times  \Big ( (Q_{t}\partial_t \ph_t)(y)  a^*(Q_{t,x}) a^*(Q_{t,y}) a(Q_{t,x})  \frac{a(\ph_t)}{\sqrt{N} }  +\text{h.c.} \Big) , 
		\end{split} \]
where, for all $j \in \{0,1,2,3\}$, the errors $ \cE_j$ are bounded by
		\[ \pm \cE_j\leq  Ce^{C|t|}(\cN_\text{ren} + 1). \]
This decomposition and the related error bounds are a direct consequence of \eqref{eq:HNdec} and basic estimates as in the proof of the previous lemmas. If we expand $ Q_t = 1- |\ph_t\rangle\langle \ph_t|$ and use that $\text{Re}\, \langle \partial_t\ph_t, \ph_t\rangle=0$, the sum of the different contributions equals 
		\be\label{eq:cHXsimp}\begin{split}
		&[i\cH_{N} , \mathcal X ] \\
		& = N \big( \| V \|_1 - 8\pi\mathfrak{a}\big) \Big( \big\langle  |\ph_t|^2\ph_t, Q_t \partial_t \ph_t\big \rangle +\text{h.c.}\Big) \\
		& \hspace{0.35cm} - \Big(  \langle \partial_t\ph_t, \ph_t\rangle    a^*(\ph_t) a( Q_t \partial_t  \ph_t  ) + \big( \|V\|_1- 8\pi\mathfrak{a}\big) \|\ph_t\|_4^4 \Big( a^*(\ph_t) a( Q_t \partial_t  \ph_t  )  +\text{h.c.}\Big) \\
		& \hspace{0.35cm} + \Big( a^*(\ph_t) a\big( Q_t \partial_t ( -\Delta \ph_t +  \|V\|_1    \,    |\ph_t|^2 \ph_t) \big)   +\text{h.c.}\Big) \\
		& \hspace{0.35cm} +\int dx dy \, N^{2} V(N(x-y))  \Big( \partial_t \ph_t(x)  \ph_t (y) a^*(Q_{t,x}) a^*(Q_{t,y})  a(\ph_t) a(\ph_t)  + \text{h.c.}\Big) \\
		& \hspace{0.35cm}- \int \!dx dy\, N^{2}V(N(x-y)) \Big( \ph_t (x)\ph_t(y) a^*(Q_{t,x}) a^*(Q_{t,y}) a(Q_t \partial_t \ph_t)  a(\ph_t) \!+\!\text{h.c.} \!\Big)  \\
		&\hspace{0.35cm}  -  \int dx dy \, N^{2} V(N(x-y))  \Big(    \ph_t (y) a^*(Q_{t,x}) a^*(Q_{t,y}) a(Q_{t,x})  a(Q_t \partial_t \ph_t)  + \text{h.c.}\Big) \\
		&\hspace{0.35cm}+\int dx dy\, N^{2}V(N(x-y))  \Big (  \partial_t \ph_t(y)  a^*(Q_{t,x}) a^*(Q_{t,y}) a(Q_{t,x})  a(\ph_t)   +\text{h.c.} \Big) +\cE_{[ i\cH_{N} , \mathcal X ]}, 		 
		\end{split}\ee
up to an error $\cE_{[i\cH_{N} , \mathcal X ]}$ that is bounded by $ \cE_{[\cH_{N} , \mathcal X ]}\leq Ce^{C|t|}(\cN_\text{ren} + 1)$. 

Now, observe that the last four terms on the right hand side of \eqref{eq:cHXsimp} are structurally similar to the last contribution to $ \cH_N^{(2)}$ and, respectively, to $\cH_N^{(3)}$, defined in \eqref{eq:HNdec}. We can therefore proceed similarly as in Lemma \ref{lm:dec} and extract their main contributions by comparing them to $ \cH_\text{ren}$. To this end, we substitute \eqref{eq:aux1} into the \rhs of \eqref{eq:cHXsimp} and observe that normal ordering causes cancellations between the terms on the first and fourth lines, the second and fifth lines, and between the terms on the third and last lines. Combined with \eqref{eq:GPdyn} and \eqref{eq:dtX}, we then arrive at
		\be\label{eq:lmdec21} \begin{split}
		&[i\cH_{N} , \mathcal X ] +\partial_t \mathcal X \\
		& =   \big(  \|V\|_1- 8\pi\mathfrak{a}\big)    \,\Big( a^*(\ph_t) a\big( Q_t  \overline{ \ph}_t \partial_t \ph_t^2 \big)  +\text{h.c.}\Big) \\
		&\hspace{0.5cm} +\int dx dy \, N^{2} V(N(x-y))  \Big( \partial_t \ph_t(x)  \ph_t (y) b_x a^*(Q_{t,y})  a(\ph_t) a(\ph_t)  + \text{h.c.}\Big)  +  \cE_{ \mathcal X }, 	
		\end{split}\ee
up to an overall error which is bounded by $ \pm \cE_{\mathcal X}\leq  C \cV_\text{ren} + Ce^{C|t|}   (\cN_{\text{ren}} +1) $. 

It remains to compare the right hand side in \eqref{eq:lmdec21} with $  [ i \cH_N, \cQ_\text{ren}] $. Based on \eqref{eq:HNdec}, a similar computation shows that $ \pm [i \cH_{N}^{(0)}, \cQ_\text{ren} ] \leq Ce^{C|t|}(\cN_\text{ren} + 1)$ and that
		\[\begin{split}  
		[i\cH_{N}^{(1)}, \cQ_\text{ren} ]&=  - \big( \|V\|_1- 8\pi\mathfrak{a} \big)\Big(   a^*(\ph_t ) a \big(  (Q_t\otimes Q_t \partial_t k_t )Q_t |\ph_t|^2\ph_t\big) +\text{h.c.}\Big)   + \Delta_1 ,\\
		[i\cH_{N}^{(2)}, \cQ_\text{ren} ] &= - \frac12 \big(\|V\|_1-8\pi\mathfrak{a}\big) \Big( \big \langle \ph_t^2, \partial_t (\ph^2_t) \big\rangle + \text{h.c.} \Big)\\
		&\hspace{0.5cm}- \int dx\, \Big( a^*(Q_{t,x}) a^* ( (-\Delta_x) ( Q_t\otimes Q_t \partial_t k_t)_x) \frac{a(\ph_t)}{\sqrt{N}}\frac{a(\ph_t)}{\sqrt{N}}  + \text{h.c.} \Big)+ \Delta_2 , \\
		[i\cH_{N}^{(3)}, \cQ_\text{ren} ] &=  \Big( -  \int dx dy \, N^{5/2} V(N(x-y)) \ph_t (y)   \\
		&\hspace{1cm} \times a^*(Q_{t,x}) a^*(Q_{t,y})a^*\big(( Q_t\otimes Q_t \partial_t k_t)_x\big) \frac{a(\ph_t)}{\sqrt{N}}\frac{a(\ph_t)}{\sqrt{N}}\frac{a(\ph_t)}{\sqrt{N}}    \\
		 &\hspace{0.5cm} - \int dx dy \, N^{3} V(N(x-y))  \overline{  \partial_t k_t } (x,y) \ph_t (y) a^*(\ph_t)  a(Q_{t,x})    + \text{h.c.}\Big) + \Delta_3, \\
		[i\cH_{N}^{(4)}, \cQ_\text{ren} ] &=  - \frac12 \Big( \int dx dy\, N^2V(N(x-y))a^*(Q_{t,x}) a^*(Q_{t,y}) \\
		&\hspace{1.1cm}\times \big( \partial_t k_t(x,y)  + 2  a^*\big( ( Q_t\otimes Q_t \partial_t k_t)_x\big) a(Q_{t,y})    \big)   \frac{a(\ph_t)}{\sqrt{N}}\frac{a(\ph_t)}{\sqrt{N}}  + \text{h.c.} \Big) + \Delta_4 , 
		\end{split} \]
up to further error terms $ \Delta_j$ that are controlled by $\pm \Delta_j\leq  Ce^{C|t|}(\cN_\text{ren} + 1)$. 

To combine the different contributions to  $  [ i \cH_N, \cQ_\text{ren}] $, we proceed as before. That is, we substitute \eqref{eq:aux1} and bring all terms into normal order. One then finds that
		\[    [ i \cH_{N}^{(1)} + i \cH_{N}^{(3)}, \cQ_\text{ren} ]  =  -  \big(  \|V\|_1- 8\pi\mathfrak{a}\big)    \,\Big( a^*(\ph_t) a\big( Q_t  \overline{ \ph}_t \partial_t \ph_t^2 \big)  +\text{h.c.}\Big) + \cE_{[i \cH_{N}^{(1)} + i\cH_{N}^{(3)}, \cQ_\text{ren} ]},   \]
where $ \cE_{[i\cH_{N}^{(1)} + i\cH_{N}^{(3)}, \cQ_\text{ren} ]} \leq C \cH_\text{ren} + Ce^{C|t|}   (\cN_{\text{ren}} +1)$. Similarly, based on the zero energy scattering equation \eqref{eq:0en}, the identities \eqref{eq:aux1} and
		\[ \partial_t k_t (x,y)  = N(1-f)(N(x-y) \chi(x-y) \big(  (\partial_t \ph_t) (x) \ph_t (y)+ \ph_t(x) (\partial_t \ph_t)(y)\big) \]
as well as the kernel properties of Lemma \ref{lm:ker}, one readily finds that
		\[\begin{split}
		&[i\cH_{N}^{(2)} + i\cH_{N}^{(4)}, \cQ_\text{ren} ] \\
		& =  - \int dx dy \, N^{2} V(N(x-y))  \Big( \partial_t \ph_t(x)  \ph_t (y) b_x a^*(Q_{t,y})  a(\ph_t) a(\ph_t)  + \text{h.c.}\Big)   \\
		&\hspace{0.5cm}+ \cE_{[i\cH_{N}^{(2)} + i\cH_{N}^{(4)}, \cQ_\text{ren} ]},   
		\end{split}\]
for an error $ \cE_{[i \cH_{N}^{(2)} + i \cH_{N}^{(4)}, \cQ_\text{ren} ]} \leq C \cH_\text{ren} + Ce^{C|t|}   (\cN_{\text{ren}} +1)$. This shows that 
		\be\label{eq:lmdec22}\begin{split}
		&[i\cH_{N} , \cQ_\text{ren} ]   \\
		& =  -  \big(  \|V\|_1- 8\pi\mathfrak{a}\big)    \,\Big( a^*(\ph_t) a\big( Q_t  \overline{ \ph}_t \partial_t \ph_t^2 \big)  +\text{h.c.}\Big) \\
		& \hspace{0.35cm} - \int dx dy \, N^{2} V(N(x-y))  \Big( \partial_t \ph_t(x)  \ph_t (y) b_x a^*(Q_{t,y})  a(\ph_t) a(\ph_t)  + \text{h.c.}\Big)  +   \cE_{ \cQ_\text{ren} } 	
		\end{split}\ee	
for $ \pm \cE_{ \cQ_\text{ren} } \leq  C \cH_\text{ren} + Ce^{Ct}   (\cN_{\text{ren}} +1) $. By direct comparison of \eqref{eq:lmdec21} and \eqref{eq:lmdec22}, we get
		\[ [i\cH_{N} , \mathcal X  ] +\partial_t \mathcal X + [i\cH_{N} , \cQ_\text{ren} ]  =  \cE_{\mathcal X} + \cE_{ \cQ_\text{ren} }. \qedhere \] 
\end{proof}	
		
We conclude this section with the proof of Proposition \ref{prop:aux}. This is now a simple corollary of Lemma \ref{lm:dec} and Lemma \ref{lm:dec2}. 	
\begin{proof}[Proof of Proposition \ref{prop:aux}]
The first bound in Prop.\ \ref{prop:aux} follows directly from Lemma \ref{lm:aux} and Lemma \ref{lm:dec}, so let us focus on the Gronwall bound. Without loss of generality, consider $t\geq 0$. We then compute 
		\be\label{eq:gron1}\begin{split}
		&\partial_t   \big \langle    \cH_N + \cQ_\text{ren}  +  Ce^{Ct}   (\cN_{\text{ren}} +1)  \big\rangle_{\psi_{N,t} } \\
		& =   C^2e^{Ct}  \big \langle   \cN_{\text{ren}} +1  \big\rangle_{\psi_{N,t} } +Ce^{Ct}     \big \langle   \big[i H_N, \cN_{\text{ren}} \big]   + \partial_t \cN_{\text{ren}}   \big\rangle_{\psi_{N,t} }  \\
		&\hspace{0.5cm}+ \big \langle   \big [ iH_N, -  \big(a^*(\ph_t) a(Q_t i\partial_t \ph_t)+\text{h.c.}\big)  + \langle i\partial_t \ph_t,\ph_t\rangle\cN_{\bot \ph_t} + \cQ_\text{ren}\big]   \big\rangle_{\psi_{N,t} } \\
		&\hspace{0.5cm} + \big \langle  \partial_t \big( -  \big(a^*(\ph_t) a(Q_t i\partial_t \ph_t)+\text{h.c.}\big)  + \langle i\partial_t \ph_t,\ph_t\rangle\cN_{\bot \ph_t} +\cQ_\text{ren}\big)  \big\rangle_{\psi_{N,t} }.  
		\end{split}\ee
By \eqref{def:cHN} and Lemma \ref{lm:dec}, the second term on the \rhs in \eqref{eq:gron1} is controlled by 
		\[\begin{split}
		&\big|  \big \langle   \big[ i H_N, \cN_{\text{ren}} \big]   + \partial_t \cN_{\text{ren}}   \big\rangle_{\psi_{N,t} } -  \big \langle   \big[i \cH_N,  \cN_{\text{ren}} \big] \big\rangle_{\psi_{N,t} } \big|\\
		&\hspace{0.5cm}\leq ce^{ct}\big \langle\cH_{\text{ren}} + C e^{Ct} (\cN_{\text{ren}} +1)\big\rangle_{\psi_{N,t} } \leq ce^{ct} \big \langle    \cH_N +\cQ_\text{ren} +  Ce^{Ct}   (\cN_{\text{ren}} +1)  \big\rangle_{\psi_{N,t} }
		\end{split}\]
and, based on the same lemma and \eqref{eq:NNren}, we also obtain that 
		\[\begin{split}
		\big|    \big \langle   \big[i \cH_N,  \cN_{\text{ren}} \big] \big\rangle_{\psi_{N,t} } \big| \leq ce^{ct} \big \langle    \cH_N +\cQ_\text{ren}  +  Ce^{Ct}   (\cN_{\text{ren}} +1)  \big\rangle_{\psi_{N,t} }. 
		\end{split}\]
Similarly, Lemma \ref{lm:dec2} implies directly that
		\[\begin{split}
		& \Big| \big \langle   \big [ i H_N, -  \big(a^*(\ph_t) a(Q_t i\partial_t \ph_t)+\text{h.c.}\big)  + \langle i\partial_t \ph_t,\ph_t\rangle\cN_{\bot \ph_t} + \cQ_\text{ren}\big]   \big\rangle_{\psi_{N,t} } \\
		&\hspace{0.2cm} + \big \langle  \partial_t \big( -  \big(a^*(\ph_t) a(Q_t i\partial_t \ph_t)+\text{h.c.}\big)  + \langle i\partial_t \ph_t,\ph_t\rangle\cN_{\bot \ph_t}+ \cQ_\text{ren}\big)  \big\rangle_{\psi_{N,t} }\Big|  \\
		&\leq ce^{ct} \big \langle    \cH_N  + \cQ_\text{ren}+  Ce^{Ct}   (\cN_{\text{ren}} +1)  \big\rangle_{\psi_{N,t} }.  \qedhere
		\end{split}\]
\end{proof}

\appendix 

\section{Properties of the Gross-Pitaevskii Equation} \label{app:dyn}

The next proposition collects basic properties of the solution of the time-dependent Gross-Pitaevskii equation \eqref{eq:GPdyn}. Its proof follows essentially from standard arguments; we refer to \cite[Appendix A]{BDS} and \cite[Chapters 3 to 6]{Caz} for the details.  
\begin{prop}\label{prop:GPdyn}
Consider the time dependent Gross-Pitaevskii equation \eqref{eq:GPdyn}. Then: 
\begin{enumerate}[a)]

	\item  \textsc{Well-Posedness.} For every $\ph \in H^1(\bR^3)$ with $\| \ph \|_2 = 1$, there exists a unique global solution $t \to \ph_t \in C(\bR, H^1(\bR^3))$ of \eqref{eq:GPdyn} with initial data $ \ph$. We have that $\| \ph_t \|_2  = 1$ and that $ \cE_{\emph{GP}}(\ph_t) = \cE_{\emph{GP}}(\ph)$ for all $t \in \bR$. In particular, we have that
		\[ \sup_{t\in\bR} \| \ph_t \|_{H^1}   \leq C, \hspace{0.5cm} \sup_{t\in\bR} \| \ph_t \|_{4} \leq C.  \]
	
	\item \textsc{Higher Regularity.} If $ \varphi\in H^m(\bR^3)$ for some $m \geq 2$, then $\ph_t \in H^m(\bR^3)$ for every $t\in\bR$. Moreover, there exists $C>0$, depending on $m$ and on $\|\varphi\|_{H^m}$, and $c>0$, depending on $m$ and $\| \varphi\|_{H^1}$, such that for all $t\in \bR$, we have 
	   \[ \| \ph_t \|_{H^m} \leq Ce^{c|t|}. \]
	
	\item  \textsc{Regularity of Time Derivatives.} If $\varphi\in H^4(\bR^3)$, there exists $C>0$, depending on $\| \varphi\|_{H^4}$, and $c>0$, depending on $ \| \varphi\|_{H^1}$, such that for all $t \in \bR$, we have that
		\[ \|  \partial_t \ph_t \|_{H^2} \leq Ce^{c|t|},\;\; \| \partial_t^2 \ph_t \|_{H^2} \leq Ce^{c|t|}. \]	
\end{enumerate}
\end{prop}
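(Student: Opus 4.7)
\medskip

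\textbf{Proof plan.} The plan is to handle the three parts in order, since (b) and (c) bootstrap on (a). For part (a), I would first establish local well-posedness in $H^1(\bR^3)$ by recasting \eqref{eq:GPdyn} in Duhamel form
\[ \ph_t = e^{it\Delta}\ph - 8\pi i\mathfrak{a}\int_0^t e^{i(t-s)\Delta}\bigl(|\ph_s|^2 \ph_s\bigr)\,ds, \]
and running a contraction mapping argument in a suitable Strichartz space (the cubic nonlinearity is $H^1$-subcritical in three dimensions, so this is standard for defocusing cubic NLS). To globalize, I would derive conservation of the $L^2$-mass by pairing the equation with $\bar\ph_t$ and taking imaginary parts, and of the energy $\cE_{\text{GP}}$ by pairing with $\partial_t\ph_t$ and taking real parts. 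Since $\mathfrak{a}>0$, the quartic term in $\cE_{\text{GP}}$ is non-negative, so the energy bounds $\|\nabla\ph_t\|_2$ uniformly in $t$; combined with mass conservation this yields $\sup_t\|\ph_t\|_{H^1}<\infty$, and Sobolev embedding then bounds $\|\ph_t\|_4$.

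For part (b), I would proceed by induction on $m$. Given the a priori $H^1$ bound from (a), apply $\nabla^\alpha$ with $|\alpha|\leq m$ to \eqref{eq:GPdyn}, pair with $\nabla^\alpha\bar\ph_t$, and take imaginary parts to extract $\partial_t\|\ph_t\|^2_{\dot H^m}$. The resulting right-hand side is a sum of Leibniz terms of the form $\int \nabla^\alpha\bar\ph_t\cdot\nabla^\beta\ph_t\,\nabla^\gamma\ph_t\,\nabla^\delta\bar\ph_t$ with $|\beta|+|\gamma|+|\delta|=m$. Using that $H^m(\bR^3)$ is a Banach algebra for $m\geq 2$, that $H^1\hookrightarrow L^p$ for $p\leq 6$, and the uniform $H^1$-bound from (a), these terms are controlled by $C(\|\ph\|_{H^1})\,\|\ph_t\|_{H^m}^2$ together with lower-order pieces already controlled by induction. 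Gronwall then produces the asserted exponential-in-$t$ bound, with $c$ depending only on $\|\ph\|_{H^1}$ (through the a priori bound) and $C$ on $\|\ph\|_{H^m}$ (through the initial datum).

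Part (c) is essentially algebraic given (b): reading $\partial_t\ph_t$ off from \eqref{eq:GPdyn} gives
\[ \partial_t\ph_t = i\Delta\ph_t - 8\pi i\mathfrak{a}|\ph_t|^2\ph_t, \]
so under $\ph\in H^4$ one has $\ph_t\in H^4$ with the bound from (b), and the $H^2$-algebra property yields $\||\ph_t|^2\ph_t\|_{H^2}\leq C\|\ph_t\|_{H^2}^3\leq Ce^{c|t|}$, hence $\|\partial_t\ph_t\|_{H^2}\leq Ce^{c|t|}$. For $\partial_t^2\ph_t$, differentiate \eqref{eq:GPdyn} once more in time; the term $\partial_t(|\ph_t|^2\ph_t)$ is again controlled in $H^2$ by the algebra property applied to $\ph_t\in H^4$ and $\partial_t\ph_t\in H^2$, giving the same exponential bound.

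The main technical obstacle is organizing the Leibniz estimates in part (b) so that the top-order derivative $\nabla^m\ph_t$ appears only linearly on the right-hand side after the cancellation from taking imaginary parts, rather than at a cubic power which would preclude Gronwall closure. This is precisely where the non-negativity of the nonlinearity in $\cE_{\text{GP}}$, the $H^m$-algebra structure for $m\geq 2$, and the uniform $H^1$-control from (a) conspire favourably; the details are classical for the defocusing cubic NLS and are carried out in the cited references \cite{BDS, Caz}.
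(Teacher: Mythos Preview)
Your proposal is correct and aligns with the paper's treatment: the paper does not actually prove this proposition but simply states that ``its proof follows essentially from standard arguments'' and refers to \cite[Appendix A]{BDS} and \cite[Chapters 3 to 6]{Caz}. Your sketch (Duhamel/Strichartz local theory plus conservation laws for (a), $H^m$ energy estimates closed by Gronwall for (b), and the equation together with the $H^2$-algebra property for (c)) is precisely the standard route carried out in those references, which you also cite at the end.
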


\section{Basic Fock Space Operators} \label{app:fock}

In this section, we collect a few standard results on the creation and annihilation operators defined in Section \ref{sec:intro}. The proof of the following lemma is straightforward and follows with the same arguments as in \eg \cite[Section 2]{BDS} or \cite[Section 2]{BS}. 

\begin{lemma}\label{lm:fock} 
Let $ f, g \in L^2(\bR^{3}), h \in L^2(\bR^3\times \bR^3)$ and let the $ a_x, a^*_y$ and $ a(Q_{t,x}), a^*(Q_{t,y})$ be defined as in \eqref{def:aas} and \eqref{def:aQ}, respectively. Then, in $L^2_s(\bR^{3N})$ we have that
		\[  0\leq \cN_{\bot \ph_t} = N - a^*(\ph_t)a(\ph_t) = \int dx\, a^*(Q_{t,x}) a(Q_{t,x}) \leq \int dx\, a^*_x a_x =  N. \]
Moreover, for every $ \phi_N\in L^2_s(\bR^{3N})$, we have that 
		\[\begin{split}
		\| a(f)\phi_N\| & \leq \| f\| \sqrt N \|\phi_N\|, \;\;\; \| a^*(f)\phi_N\| \leq \| f\| \sqrt {N+1} \|\phi_N\|, \\
		\| a(Q_t f)\phi_N\| & \leq \| f\|  \| \cN_{\bot \ph_t}^{1/2}\phi_N \|, \;\;\; \| a^*(Q_t f)\phi_N\| \leq \| f\| \| (\cN_{\bot \ph_t}+1)^{1/2}\phi_N\|, \\
		\end{split}\] 
and that 
		\[\big|  \langle a^*(f) a(g)\rangle_{\phi_N} \big| \leq N \| f\| \| g\|\|\phi_N\|^2 , \;  \big|  \langle a^*(Q_tf) a(Q_tg)\rangle_{\phi_N} \big|\leq \langle  \| Q_t f\| \| Q_tg\|\cN_{\bot \ph_t}\rangle_{\phi_N}.\]
Similarly, if we set $ h_{x}(y) = h(x,y)$, then we have that
		\[\begin{split} 
	 	\int dx \, \big| \big\langle a^*_x a^* (h_x) \big \rangle_{\phi_N} \big| & \leq N \|h \| \|\phi_N\|^2, \int dx \, \big| \big\langle a^*(Q_{t,x}) a^* (Q_t h_x) \big \rangle_{\phi_N} \big|  \leq  \| h \| \langle \cN_{\bot \ph_t}\rangle_{\phi_N}\\
		\int dx \, \big| \big\langle a^*_x a (h_x) \big \rangle_{\phi_N} \big| &\leq N \|h \| \|\phi_N\|^2, \int dx \, \big| \big\langle a^*(Q_{t,x}) a (Q_t h_x) \big \rangle_{\phi_N} \big|   \leq  \|h \|  \langle \cN_{\bot \ph_t}\rangle_{\phi_N}.
		\end{split} \]
\end{lemma}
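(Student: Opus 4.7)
The lemma is a compilation of standard Fock space estimates; the plan is to reduce each statement to the elementary one-mode bound $a^*(f)a(f)\leq \|f\|^2\cN$ combined with Cauchy--Schwarz, where $\cN=\int dx\,a_x^*a_x$ is the total number operator. No real obstacle is anticipated---the main subtlety is bookkeeping of factors $N$ versus $N+1$ when creation operators shift the particle-number sector, and handling the distributional fields $a_x$ via integration against $L^2$ kernels rather than pointwise.

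First, I would establish the identity $\cN_{\bot\ph_t}=\int dx\, a^*(Q_{t,x})a(Q_{t,x})$. Inserting $a_x^*=a^*(Q_{t,x})+\overline{\ph_t(x)}a^*(\ph_t)$ and its adjoint into $\int dx\,a_x^*a_x=\cN$ and expanding, the two cross terms vanish after integrating against $\ph_t(x)$ or $\overline{\ph_t(x)}$ since $Q_t\ph_t=0$ gives $a^{\sharp}(Q_t\ph_t)=0$, while the $|\ph_t(x)|^2$-diagonal integrates to $a^*(\ph_t)a(\ph_t)$ by $\|\ph_t\|=1$. Restricting to $L^2_s(\bR^{3N})$ where $\cN=N$ yields both the identity and the trivial bound $0\leq \cN_{\bot\ph_t}\leq N$.

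For the operator-norm bounds on $a(f), a^*(f)$ I would invoke the general fact $a^*(f)a(f)\leq \|f\|^2\cN$ (proved by extending $f/\|f\|$ to an orthonormal basis of $L^2(\bR^3)$ and comparing with $\int a_x^*a_x\,dx$), together with the commutator $a(f)a^*(f)=a^*(f)a(f)+\|f\|^2$; restricted to the $N$-particle sector these give the claimed $\sqrt{N}$ and $\sqrt{N+1}$ bounds. Applying the same argument inside $\operatorname{ran}Q_t$ yields $a^*(Q_tf)a(Q_tf)\leq \|Q_tf\|^2\cN_{\bot\ph_t}\leq\|f\|^2\cN_{\bot\ph_t}$, and the bound on $\|a^*(Q_tf)\phi_N\|$ follows after adding the commutator contribution $\|Q_tf\|^2\leq\|f\|^2$. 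The scalar inner-product bounds $|\langle a^*(f)a(g)\rangle_{\phi_N}|$ and $|\langle a^*(Q_tf)a(Q_tg)\rangle_{\phi_N}|$ are immediate consequences of $\langle a^*(f)a(g)\rangle_{\phi_N}=\langle a(f)\phi_N,a(g)\phi_N\rangle$ combined with the norm bounds just established and Cauchy--Schwarz.

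The two-particle kernel bounds reduce to Fubini and Cauchy--Schwarz on the $x$-integral. For $\int dx\,|\langle a_x^* a^*(h_x)\rangle_{\phi_N}|$, I would write the integrand as $|\langle a_x\phi_N,a^*(h_x)\phi_N\rangle|$ and apply Cauchy--Schwarz in $x$, bounding it by the product of $(\int dx\,\|a_x\phi_N\|^2)^{1/2}=\sqrt{N}\|\phi_N\|$ and $(\int dx\,\|a^*(h_x)\phi_N\|^2)^{1/2}\leq \sqrt{N+1}\,\|h\|\|\phi_N\|$, the latter following from the pointwise bound $\|a^*(h_x)\phi_N\|^2\leq(N+1)\|h_x\|^2\|\phi_N\|^2$ and the Fubini identity $\int\|h_x\|^2\,dx=\|h\|^2$. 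The $a_x^*a(h_x)$ variant is analogous, using $\|a(h_x)\phi_N\|^2\leq N\|h_x\|^2\|\phi_N\|^2$. The projected kernel estimates proceed identically, with the identity $\int dx\,\|a(Q_{t,x})\phi_N\|^2=\langle\cN_{\bot\ph_t}\rangle_{\phi_N}$ replacing $\int dx\,\|a_x\phi_N\|^2=N\|\phi_N\|^2$ and the bound $\|Q_t h_x\|\leq\|h_x\|$ supplying the remaining ingredient.
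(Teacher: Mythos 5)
Your proposal is correct and is precisely the standard argument the paper relies on: the paper gives no proof of Lemma \ref{lm:fock}, referring instead to the same elementary facts ($a^*(f)a(f)\leq \|f\|^2\cN$ within $\mathrm{ran}\,Q_t$ or $L^2(\bR^3)$, the CCR, and Cauchy--Schwarz in $x$) found in \cite[Section 2]{BDS} and \cite[Section 2]{BS}. One cosmetic remark: for the $a^*_x a^*(h_x)$ (and $a^*(Q_{t,x})a^*(Q_th_x)$) estimates your Cauchy--Schwarz chain yields $\sqrt{N(N+1)}\,\|h\|\|\phi_N\|^2$ rather than the stated $N\|h\|\|\phi_N\|^2$; this is harmless, since for $\phi_N\in L^2_s(\bR^{3N})$ these expectations vanish identically by particle-number conservation (and in every application the extra factor would in any case be absorbed into the constant $C$).
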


\section{Properties of the Scattering Kernel} \label{app:kern}

The goal of this section is to collect basic properties of the solution $f$ of the zero energy scattering equation \eqref{eq:0en} and of the correlation kernel $k_t$, defined in \eqref{def:k}. It is well known (see \cite[Appendix C]{LSSY}) that under our assumptions on $V\in L^1(\bR^3)$, we have that $0 \leq f \leq 1$, that $f$ is radially symmetric and radially increasing and that for every $x\in \bR^3$ outside of the support of $V$, it holds true that
		\be \label{eq:flrgx}  f(x) = 1- \frac{\mathfrak{a}}{|x|}. \ee 
In particular, we have that $ 0\leq (1-f)(x) \leq C|x|^{-1}  $ and, if $\text{supp}(V)\subset B_R(0)$, that
		\[ \int_{\bR^3} dx\, V(x) f(x) =  2 \int_{B_R(0) } dx\,  (\Delta f)(x) =  2 \int_{ \partial B_R(0) } dS(x)  \, (\nabla f)(x) \cdot \frac{x}{|x|} = 8\pi \mathfrak{a}.   \]

\begin{lemma}\label{lm:ker}
Let $k_t$ be defined as in \eqref{def:k}, where $ t\mapsto  \ph_t \in C^1(\bR, H^1(\bR^3))$ denotes the unique solution of the time-dependent Gross-Pitaevskii equation with $\ph_{t=0}\in H^4(\bR^3)$ and where $\chi \in C_c^\infty (B_{2r}(0))$ with $ \chi_{|  B_{r}(0)}\equiv 1$. Then, $k_t$ satisfies the following properties: 
\begin{enumerate}[a)]

\item We have that $ k_t \in L^2(\bR^3\times \bR^3)$ with 
		\[ \sup_{t\in\bR }\| k_t\| \leq Cr^{1/2}\;\; \text{ and } \;\;\| k_{t,x} \| \leq C |\ph_t(x)| \leq Ce^{C|t|}  \]
for some constant $C>0$ that is independent of $r>0$ and $t\in \bR$. Similarly, we have  
		\[\begin{split}
		\| \partial_t k_t \| &\leq C e^{C|t|} \;\; \text{ and } \;\;\| \partial_t k_{t,x} \| \leq C \big( |\ph_t(x)| + |\partial_t \ph_t(x)| \big) \leq Ce^{C|t|}, \\
		 \| \partial_t^2 k_t \| &\leq C e^{C|t|} \;\; \text{ and } \;\;\| \partial_t^2 k_{t,x} \| \leq C \big( |\ph_t(x)| + |\partial_t \ph_t(x)|+ |\partial_t^2 \ph_t(x)| \big) \leq Ce^{C|t|}. \\
		\end{split}\]
The same bounds hold true if we replace $ k_t $ by $Q_t\otimes Q_t k_t$ for $Q_t = 1- |\ph_t\rangle\langle\ph_t|$.
\item  Define $ f_t(x,y) $ by 
		\[\begin{split}
		f_t(x,y) & = (-\Delta_1 k_t) (x,y)   - \frac 12 N^3(Vf)(N(x-y)) \ph_t(x)\ph_t(y) \\
		&\hspace{0.5cm} - 2 N^2 (\nabla f)(N(x-y)) \cdot \nabla_1\big(\chi(x-y) \ph_t(x) \ph_t(y)\big).
		\end{split}\]
Then $ f_t, \partial_t f_t  \in L^2(\bR^3\times \bR^3)$ with
 		\[ \sup_{t\in\bR } \| f_t\| \leq C, \;\;  \| \partial_t f_t\| \leq  C e^{C|t|}.  \]
\item Define $ g_t(x,y) = \int dz \, (\nabla_2 k_t)(x,z) (\nabla_2 \overline k_t)(y,z)  $. Then $ g_t, \partial_t g_t  \in L^2(\bR^3\times \bR^3)$ with 
		\[    \| g_t\| \leq  C e^{C|t|}, \;\;  \| \partial_t g_t\| \leq  C e^{C|t|}.  \]
The same bounds hold true if in the definition of $ g_t$ we replace $ k_t $ by $Q_t\otimes Q_t k_t$.

\end{enumerate}
\end{lemma}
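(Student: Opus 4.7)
All three parts rest on the same three ingredients: the pointwise bound $0 \le (1-f)(u) \le C\min(1,\,1/|u|)$, which comes from the explicit form \eqref{eq:flrgx} of $f$ outside $\operatorname{supp}(V)$; the zero-energy scattering equation $-2\Delta f + V f = 0$, which relates $\Delta f$ to the potential; and Hardy's inequality $\int |\psi(u)|^2 |u|^{-2}\,du \le 4\|\nabla\psi\|_2^2$, which absorbs the $1/|x-y|$-singularity of $N(1-f)(N\cdot)$ against the regularity of $\ph_t$ provided by Prop.~\ref{prop:GPdyn}. All time-dependence enters only through $\ph_t$ and its time derivatives, so Prop.~\ref{prop:GPdyn}(b)--(c) automatically absorbs it into $Ce^{C|t|}$ factors.

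\textbf{Part (a).} For $\|k_t\|^2$ I would substitute $u = x-y$ and separate variables: by rescaling $v = Nu$ and the bound $|1-f(v)| \le C/|v|$ for $|v| \ge R$, one checks $\int \chi^2(u)|N(1-f)(Nu)|^2\,du \le C r$, while $\int |\ph_t(y+u)|^2|\ph_t(y)|^2\,dy \le \|\ph_t\|_4^4 \le C$ uniformly in $(t,u)$ by Prop.~\ref{prop:GPdyn}(a). For the slice bound $\|k_{t,x}\|$, one applies Hardy to $\psi(u) = \chi(u)\ph_t(x+u)$ to get $\int |\ph_t(y)|^2 \chi^2(x-y)/|x-y|^2\,dy \le C\|\ph_t\|_{H^1}^2 \le C$, yielding $\|k_{t,x}\| \le C|\ph_t(x)|$. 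The $\partial_t$-bounds follow identically with one factor of $\ph_t$ replaced by $\partial_t\ph_t$, whose norms obey the desired $e^{C|t|}$ growth by Prop.~\ref{prop:GPdyn}(c). For $Q_t \otimes Q_t k_t$ I would expand $Q_t \otimes Q_t = \operatorname{id} - P_{\ph_t}\otimes\operatorname{id} - \operatorname{id}\otimes P_{\ph_t} + P_{\ph_t}\otimes P_{\ph_t}$; each rank-one correction is a product of inner products $\langle\ph_t, k_{t,\cdot}\rangle$ times $\ph_t$, controlled by the slice bound just proved.

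\textbf{Part (b).} Here the algebraic cancellation is the whole point. Using $\Delta f = \tfrac{1}{2}Vf$ and the product rule,
\[
-\Delta_1 k_t = \tfrac{1}{2}N^3(Vf)(N(x-y))\chi(x-y)\ph_t(x)\ph_t(y) + 2N^2(\nabla f)(N(x-y))\cdot \nabla_1\bigl[\chi(x-y)\ph_t(x)\ph_t(y)\bigr] - N(1-f)(N(x-y))\Delta_1\bigl[\chi(x-y)\ph_t(x)\ph_t(y)\bigr].
\]
Choosing $N$ so large that $\operatorname{supp} V(N\cdot) \subset \{\chi = 1\}$, the first two subtractions defining $f_t$ cancel the first two terms on the right exactly, leaving only $f_t = -N(1-f)(N(x-y))\,\Delta_1[\chi(x-y)\ph_t(x)\ph_t(y)]$. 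I would then expand $\Delta_1[\chi\ph_t(x)\ph_t(y)]$ by the product rule into three pieces involving $\Delta\chi$, $\nabla\chi\cdot\nabla\ph_t(x)$, and $\chi\Delta\ph_t(x)$; using $|N(1-f)(Nu)| \le C/|u|$ on $\operatorname{supp}\chi$ and Hardy applied in the $y$-variable, the remaining $x$-integral is bounded by $\|\ph_t\|_{H^2}^2$ times Sobolev norms of $\chi$, yielding the desired $L^2$-bound. The $\partial_t f_t$ estimate is identical with one $\ph_t$ replaced by $\partial_t\ph_t$, Prop.~\ref{prop:GPdyn}(c) giving the $e^{C|t|}$ growth.

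\textbf{Part (c).} The natural move is to integrate by parts in $z$: since $k_t$ has compact support in the difference of arguments,
\[
g_t(x,y) = \int dz\,\nabla_z k_t(x,z)\cdot \nabla_z\overline{k_t(y,z)} = -\int dz\,k_t(x,z)\,\Delta_z\overline{k_t(y,z)}.
\]
Applying (b) with the roles of the two arguments swapped decomposes $\Delta_z k_t(y,z)$ into the $L^2$-bounded remainder (analogue of $f_t$) plus two explicit singular contributions: a short-range $N^3(Vf)(N(y-z))\ph_t(y)\ph_t(z)$ piece and a $\nabla$-piece $2N^2(\nabla f)(N(y-z))\cdot\nabla_z[\chi\ph_t(y)\ph_t(z)]$. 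For each of the three resulting contributions to $g_t(x,y)$ I would estimate the Hilbert--Schmidt norm of the corresponding integral operator by Cauchy--Schwarz combined with (a) and the kernel bounds of (b). For the $V$-piece, one uses that $\|N^3V(N\cdot)\|_1 = \|V\|_1$ to peel off the convolution and reduce to $\|k_t(x,\cdot)\|$; for the $\nabla f$-piece, one further integration by parts in $z$ moves the gradient back onto $k_t(x,z)$, which is controlled as in (b). The time-derivative bound on $g_t$ follows by the same scheme, producing the $e^{C|t|}$ growth from $\partial_t \ph_t$.

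\textbf{Main obstacle.} The conceptual ideas are classical, and the key algebraic cancellation in (b) via the scattering equation is entirely explicit. The real work is the bookkeeping in (c): after the integration by parts, each of the three pieces of $\Delta_z k_t(y,z)$ is multiplied by $k_t(x,z)$ and integrated in $z$, producing a double kernel in $(x,y)$ in which the $N$-scale singularity at $z = y$ interacts with the $N$-scale singularity at $z = x$. The delicate point is to show that the resulting $L^2(\bR^3\times\bR^3)$-norm is controlled uniformly in $N$, which requires using that $V$ and $\nabla f$ are sufficiently concentrated in the rescaled variable and that the residual $1/|x-z|$-singularity of $k_t(x,z)$ is absorbed against $\ph_t$ by Hardy.
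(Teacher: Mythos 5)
Your parts a) and b) follow essentially the same route as the paper: the $Cr^{1/2}$ bound via $N(1-f)(Nu)\leq C|u|^{-1}$ and $\|\ph_t\|_4$, the slice bound via Hardy, and the exact cancellation $f_t(x,y)=N(1-f)(N(x-y))(-\Delta_1)\big(\chi(x-y)\ph_t(x)\ph_t(y)\big)$ from the scattering equation together with $V(N\cdot)\chi\equiv V(N\cdot)$ for $N$ large are precisely the paper's argument. (One wrinkle you share with the paper: the term containing $\Delta\ph_t(x)$ is only controlled by $\|\ph_t\|_{H^2}\leq Ce^{c|t|}$ from Prop.\ \ref{prop:GPdyn}, so this argument really gives $\|f_t\|\leq Ce^{C|t|}$ rather than a $t$-uniform constant; this is harmless where the lemma is used.)

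In part c) your reduction, integrating by parts in $z$ to write $g_t(x,y)=-\int dz\, k_t(x,z)\Delta_z \overline{k_t}(y,z)$ and reusing the decomposition from b), is fine, and your treatment of the $f_t$-piece (Hilbert--Schmidt composition) and of the $N^3(Vf)$-piece (Schur test with $\|Vf\|_1$ and $\|\ph_t\|_\infty$) is correct. The gap is in the $\nabla f$-piece, which is the heart of the matter. Your second integration by parts, i.e.\ writing $N^2(\nabla f)(N(y-z))=\nabla_z\big(N(1-f)(N(y-z))\big)$ and moving the derivative, produces $N(1-f)(N(y-z))\,\nabla_z k_t(x,z)\cdot\nabla_z\big(\chi(y-z)\ph_t(y)\ph_t(z)\big)$ plus a harmless term with $\Delta_z(\chi\ph_t\ph_t)$, and $\nabla_z k_t(x,z)$ is \emph{not} ``controlled as in (b)'': part b) bounds $\Delta_1 k_t$ only after subtracting the singular counterterms, whereas $\nabla_z k_t(x,z)$ still contains the piece $N^2(\nabla f)(N(x-z))\chi(x-z)\ph_t(x)\ph_t(z)$, whose $L^2_{x,z}$-norm diverges like $N^{1/2}$ since $|\nabla f(v)|\sim \mathfrak{a}|v|^{-2}$ by \eqref{eq:flrgx}. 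So your manipulation merely moves the $|u|^{-2}$-singularity from the $y$- to the $x$-variable, and the resulting kernels of the schematic form $\int dz\, |y-z|^{-1}|x-z|^{-2}(\cdots)$ are exactly the delicate objects; your ``main obstacle'' paragraph acknowledges this but supplies no estimate. This is where most of the paper's proof of c) lives: it uses the Newtonian-potential representation $N^2\nabla f(Nx)=-\frac{1}{8\pi}\int dy\,\frac{x-y}{|x-y|^3}\,N^3V(Ny)$, which follows from the scattering equation, together with Cauchy--Schwarz and Hardy to show that the mixed $(1-f)\times\nabla f$ kernels are in $L^2_{x,y}$ uniformly in $N$, and a further integration by parts converting $\nabla f\cdot\nabla f$ into $\tfrac12 f\,N^3(Vf)$ plus lower-order terms. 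Alternatively, within your scheme one can skip the second integration by parts and close the $\nabla f$-piece by a Schur test, using $\sup_y\int_{|y-z|\leq 2r}N^2|\nabla f|(N(y-z))\,dz\leq C(r+N^{-1})$ (which again needs the Newtonian-potential bound for $\nabla f$ inside $\operatorname{supp}V$) together with the slice bound $\|k_{t,z}\|\leq C|\ph_t(z)|$ from a). Either way, a quantitative argument of this type, uniform in $N$, is required and is missing from your proposal.
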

\begin{proof}
We sketch the main steps of the proof for the bounds involving $k_t$; similar properties have previously been used in \cite{BDS,BS}. Below, we use without further notice that $ \| \ph_t \|_\infty, \| \partial_t \ph_t \|_\infty, \| \partial_t^2\ph_t \|_\infty\leq Ce^{C|t|}   $ and that $ \|\ph_t\| =\|\ph_{t=0}\|$, $ \sup_{t\in\bR} \| \nabla \ph_t \|\leq C,   \sup_{t\in\bR}\|  \ph_t \|_4\leq C  $ by Prop. \ref{prop:GPdyn} and $ H^2(\bR^3)\hookrightarrow L^\infty(\bR^3) $. 

Part $a)$ follows from  
		\[ \|k_t\|^2 \leq \int dxdy\, \frac{ |\chi(x-y)|^2 }{|x-y|^2} |\ph_t(x)|^2|\ph_t(y)|^2 \leq C \|\ph_t\|_4^4 \int_{|x|\leq 2r }\frac{1}{|x|^2}\leq C r  \]
uniformly in $t\in\bR$ and, using Hardy's inequality, from
		\[\|k_{t,x}\|^2 \leq |\ph_t(x)|^2 \int dy\, \frac{ |\chi(x-y)|^2 }{|x-y|^2} |\ph_t(y)|^2 \leq C |\ph_t(x)|^2 \| \nabla \ph_t (\cdot +x)\| \leq C |\ph_t(x)|^2.   \]
The bounds on the time derivatives of $ k_t$ are proved in the same way. This remark applies also to part $b)$ which follows after noting that 
		\[\begin{split}
		f_t(x,y) & =   N(1-f)(N(x-y)) (-\Delta_1)  \big(\chi(x-y) \ph_t(x) \ph_t(y) \big).
		\end{split}\]
This uses the zero energy scattering equation and that $N$ is w.l.o.g.\  large enough so that $ V(N.) \chi(.)\equiv V(N.)$. Finally, to prove part $c)$, we compute 
		\[\begin{split}
		(\nabla_2 k_t)(x,z) &=  N^2 \nabla f (N(x-z)) \chi(x-z)\ph_t(x)\ph_t(z) \\
		&\hspace{0.5cm}+ N(1-f)(N(x-z)) \nabla_2\big( \chi(x-z) \ph_t(x)\ph_t(z)\big).
		\end{split}\]
Setting $ h(x,z) = \nabla_2\big( \chi(x-z) \ph_t(x)\ph_t(z)\big)$, we then note by Hardy's inequality that 
		\[\begin{split}
		&\int dxdy\,\bigg| \int  dz \,N   f (N(x-z))N   f(N(y-z)) h(x,z) h(y,z) \bigg|^2  \\
		&\leq \! C \! \int  dxdy dz_1 dz_2 \,    \frac{ \prod_{j=1}^2| | \ph_t(z_j)|^2+ | \nabla \ph_t(z_j)|^2 |  }{|x-z_1||y-z_1||x-z_2||y-z_2|}| \ph_t(x)|^2 | \ph_t(y)|^2\leq C e^{C|t|}.
		\end{split}\]
and, similarly, that 
		\[\begin{split}
		&\int dxdy\,\bigg| \int  dz \,N^2   \nabla f (N(x-z)) N   f(N(y-z))\chi(x-z)\ph_t(x)\ph_t(z) h(y,z) \bigg|^2  \\
		&\leq  C \int   dxdy  du_1du_2 dz_1 dz_2 \,   N^3V(Nu_1)N^3V(Nu_2)  \\
		&\hspace{1cm}\times \frac{ | \ph_t(x)|^2  | \ph_t(y)|^2 \ph_t(z_1)|^2  || \ph_t(z_2)|^2+ | \nabla \ph_t(z_2)|^2||}{|x-z_1-u_1|^2 |y-z_1| |x-z_2-u_2|^2|y-z_2|}\\
		&\leq C e^{C|t|}.
		\end{split}\]
Here, we used that $N^2 \nabla f (Nx) = -\frac1{8\pi } \int dy\,  \frac{x-y}{|x-y|^3} N^3V(Ny) $ for $a.e.$ $x\in\bR^3$, which follows from $ (-2 \Delta  + N^2 V(N.))f(N.) =0$. Finally, by integration by parts, we use that
		\[\begin{split}
		&\int  dz \,N^2   \nabla f (N(x-z)) N^2   \nabla f(N(y-z))  \chi(x-z)\ph_t(x) \chi(y-z)\ph_t(y) | \ph_t(z)|^2\\
		& = \frac12  \int  dz \,N   f (N(x-z)) N^3 (Vf)(N(y-z))  \chi(x-z)\ph_t(x) \chi(y-z)\ph_t(y) | \ph_t(z)|^2 \\
		&\hspace{0.35cm} + \int  dz \,N   f (N(x-z)) N^2   \nabla f(N(y-z))  \nabla_z\big(   \chi(x-z)\ph_t(x) \chi(y-z)\ph_t(y) | \ph_t(z)|^2\big).
		\end{split}\]
Then, proceeding as before, we find that
		\[\begin{split}
		&\int dxdy\,\Big|  \int  dz \,N   f (N(x-z)) N^3 (Vf)(N(y-z))  \chi(x-z)\ph_t(x) \chi(y-z)\ph_t(y) | \ph_t(z)|^2\Big|^2\\
		 &\leq C  \int dx dy dz_1 dz_2  \, N^3 V(N(y-z_1))N^3 V(N(y-z_2)) \frac{|\ph_t(x)|^2|\ph_t(y)|^2| \ph_t(z_1)|^2| \ph_t(z_2)|^2 }{|x-z_1| |x-z_2|}\\
		 &\leq C e^{C|t|}  \int dx dy dz_1 dz_2  \, N^3 V(N(y-z_1))N^3 V(N(y-z_2)) \frac{|\ph_t(x)|^2|\ph_t(y)|^2  }{|x-z_1|^2 }\leq C e^{C|t|}
		 \end{split}\]
and that
		\[\begin{split}
		&\int dxdy\,\Big| \!\int \!  dz \,N^3  f (N(x-z))    \nabla f(N(y-z))  \nabla_z\big(   \chi(x-z)\ph_t(x) \chi(y-z)\ph_t(y) | \ph_t(z)|^2\big)\Big|^2\\
		 &\leq C  \int dx dy dz_1 dz_2 du_1 du_2  \, N^3 V(Nu_1 )N^3 V(N u_2 ) \frac{|\ph_t(x)|^2|\ph_t(y)|^2    }{|x-z_1| |x-z_2|}\\
		 &\hspace{1cm} \times \frac{ \big ( |  \ph_t(z_1)|^2 + | \nabla \ph_t(z_1)|^2\big)\big ( |  \ph_t(z_2)|^2 + | \nabla \ph_t(z_2)|^2\big) }{|y-u_1-z_1|^2 |y-u_2-z_2|^2} \\
		 &\leq C e^{C|t|}  
		 \end{split}\]
Combining the above, this implies the bounds on $g_t$ and $\partial_t g_t$ is bounded similarly.
\end{proof}

\section{Complete BEC for Small Interaction Potentials} \label{app:smallV}

The purpose of this appendix is to illustrate that the methods developed in this paper are also useful in the spectral setting. The following result generalizes the main result of \cite{BBCS1} to the trapped setting in $\bR^3$; compared to \cite{BBCS1}, its proof is substantially simpler. 

\begin{prop}\label{prop:BEClwr} 
Let $ V\in L^1(\bR^3)$ be non-negative, radial, compactly supported and such that $\|V\|_1$ is small enough. Let $V_\emph{ext}\in L_{\emph{loc}}^\infty(\bR^3)  $ be such that $ \lim_{|x|\to\infty} V_{\emph{ext}}(x) =\infty$ with at most exponential growth in $|x|$ as $|x|\to\infty$. Then, there exists a constant $C>0$, that only depends on $V$, such that for every $ \psi_N$, $ \|\psi_N\|=1$, that satisfies
		\[     \langle \psi_N, H_N^\emph{trap} \psi_N\rangle \leq N e_\emph{GP}^\emph{trap} + \Lambda, \]
we have that the one particle density $ \gamma_N^{(1)} $ associated to $\psi_N$ satisfies 
		\[ 1- \langle \ph_\emph{GP}, \gamma_N^{(1)}\ph_\emph{GP}\rangle\leq   C N^{-1}( 1+ \Lambda) .     \]
\end{prop}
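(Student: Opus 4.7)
As in the dynamical setting (cf.\ \eqref{eq:linkNbot}), we have $1-\langle \ph_\text{GP},\gamma_N^{(1)}\ph_\text{GP}\rangle = N^{-1}\langle \cN_{\bot\ph_\text{GP}}\rangle_{\psi_N}$. Combined with the energy assumption $\langle H_N^\text{trap}\rangle_{\psi_N}\leq N e_\text{GP}^\text{trap}+\Lambda$, the claim reduces to establishing the operator inequality
\[
H_N^\text{trap} - N e_\text{GP}^\text{trap} \;\geq\; c\,\cN_{\bot\ph_\text{GP}} - C
\]
on $L^2_s(\bR^{3N})$ with constants $c,C>0$ depending on $V$ and $V_\text{ext}$, provided $\|V\|_1$ is sufficiently small. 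The assumptions on $V_\text{ext}$ enter only to guarantee, via standard elliptic-regularity and Agmon-type arguments, the smoothness and sufficient decay of $\ph_\text{GP}$ (e.g.\ $\ph_\text{GP}\in H^4(\bR^3)$ and $V_\text{ext}|\ph_\text{GP}|^2\in L^\infty\cap L^2$) required to apply the kernel estimates of Appendix \ref{app:kern}.

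My plan is to adapt the proof of Lemma \ref{lm:dec} to the static trapped case, using the time-independent kernel $k(x,y) = N(1-f)(N(x-y))\chi(x-y)\ph_\text{GP}(x)\ph_\text{GP}(y)$, the projection $Q=1-|\ph_\text{GP}\rangle\langle\ph_\text{GP}|$ and renormalized fields $b_x,b^*_x$ defined as in \eqref{def:newflds} with $\ph_t$ replaced by $\ph_\text{GP}$. Repeating the condensate/excitation decomposition \eqref{eq:HNdec} and the associated Cauchy-Schwarz bounds, but with $-\Delta_x$ replaced everywhere in the one-body part by $-\Delta_x + V_\text{ext}(x)$, one arrives at the renormalized Hamiltonian
\[
\cH_\text{ren}^\text{trap} := \int dx\, b^*_x\bigl(-\Delta_x + V_\text{ext}(x) - \mu_\text{GP}\bigr) b_x + \cV_\text{ren} + \mu_\text{GP}\,\cN_{\bot\ph_\text{GP}},
\]
with chemical potential $\mu_\text{GP} = e_\text{GP}^\text{trap} + 4\pi\mathfrak{a}\|\ph_\text{GP}\|_4^4$, together with a bound
\[
H_N^\text{trap} - N e_\text{GP}^\text{trap} \;\geq\; \frac{1}{2}\cH_\text{ren}^\text{trap} - C(\cN_\text{ren}+1),
\]
where $\cN_\text{ren}\sim\cN_{\bot\ph_\text{GP}}+1$ by the time-independent analog of Lemma \ref{lm:aux}. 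The cancellations of the singular leading-order contributions proceed exactly as in Lemma \ref{lm:dec}, now using the scattering identity $\|Vf\|_1 = 8\pi\mathfrak{a}$ together with the Euler-Lagrange equation $Q(-\Delta+V_\text{ext})\ph_\text{GP} = -8\pi\mathfrak{a}\,Q(|\ph_\text{GP}|^2\ph_\text{GP})$ in place of the translation-invariant Gross-Pitaevskii equation used there. Crucially, since $\|k\|\lesssim \mathfrak{a}\, r^{1/2}$ (as $(1-f)(x)\leq \mathfrak{a}/|x|$ outside the support of $V$), the constant $C$ in the displayed error can be chosen proportional to $\mathfrak{a}$ plus a fixed $O(1)$ part coming from the purely kinetic contributions.

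The remaining step is the coercivity bound $\cH_\text{ren}^\text{trap}\geq c_0(\cN_\text{ren}+1)-C$, which I would derive from the fact that $Q(-\Delta+V_\text{ext}-\mu_\text{GP})Q\geq \delta Q$ on $QL^2(\bR^3)$ for some $\delta>0$: indeed $\mu_\text{GP}$ is the lowest eigenvalue of $h_\text{GP}:=-\Delta+V_\text{ext}+8\pi\mathfrak{a}|\ph_\text{GP}|^2$, and this operator has purely discrete spectrum with a positive gap above $\mu_\text{GP}$ (since $V_\text{ext}$ is confining), so the bounded perturbation $-8\pi\mathfrak{a}|\ph_\text{GP}|^2$ only shifts this gap mildly once $\mathfrak{a}$ is small. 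For $\|V\|_1$ small enough, the $\mathfrak{a}$-dependent part of the error $C(\cN_\text{ren}+1)$ is absorbed into $\frac12\cH_\text{ren}^\text{trap}\geq \frac{c_0}{2}(\cN_\text{ren}+1)$, while the fixed $O(1)$ part survives as the additive constant, yielding the desired operator inequality. The main technical obstacle I anticipate is the careful bookkeeping of the new $V_\text{ext}$-dependent cross terms produced when expanding $\int a_x^* V_\text{ext}(x)a_x$ via $a_x = b_x - N^{-1}a^*((Q\otimes Q k)_x)a(\ph_\text{GP})a(\ph_\text{GP}) + \ph_\text{GP}(x)a(\ph_\text{GP})$: these pieces are not present in Lemma \ref{lm:dec} and must be shown to be controllable by $\cH_\text{ren}^\text{trap}$ and $\cN_\text{ren}+1$, using Lemma \ref{lm:ker} together with the integrability and boundedness of $V_\text{ext}\ph_\text{GP}$ and $V_\text{ext}|\ph_\text{GP}|^2$ guaranteed by our growth assumption on $V_\text{ext}$ and the corresponding fast decay of $\ph_\text{GP}$.
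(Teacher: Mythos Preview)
Your strategy is the same as the paper's: decompose $H_N^{\text{trap}}$ as in \eqref{eq:HNdec}, pass to the $t=0$ renormalized fields $b_x$ of \eqref{def:newflds}, and close via the spectral gap of the one-body GP operator. Two points in your write-up should be sharpened, after which your argument coincides with the paper's.

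First, the one-body operator that the decomposition actually produces inside $\int b_x^*(\,\cdot\,)b_x$ is the full $h_{\text{GP}}=-\Delta+V_{\text{ext}}+8\pi\mathfrak a|\ph_{\text{GP}}|^2-\mu_{\text{GP}}$: the $-\mu_{\text{GP}}$ comes from the subleading $c$-number terms in $H_{N,0}^{\text{trap}}$ (via $a^*(\ph_{\text{GP}})a(\ph_{\text{GP}})=N-\cN_{\bot\ph_{\text{GP}}}$ and $\int b_x^*b_x\approx\cN_{\bot\ph_{\text{GP}}}$), while the Hartree piece $8\pi\mathfrak a|\ph_{\text{GP}}|^2$ survives from the second line of $H_{N,2}^{\text{trap}}$. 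In particular the extra summand $+\mu_{\text{GP}}\cN_{\bot\ph_{\text{GP}}}$ in your $\cH_{\text{ren}}^{\text{trap}}$ should not be there. Keeping $h_{\text{GP}}$ also renders your perturbation argument unnecessary: since $h_{\text{GP}}\ge 0$ with $\ker h_{\text{GP}}=\text{span}\{\ph_{\text{GP}}\}$ and a fixed gap $2\lambda_{\text{GP}}>0$, and since $b_x$ is $Q$-valued, one gets $\int b_x^*h_{\text{GP}}b_x\ge 2\lambda_{\text{GP}}\int b_x^*b_x\approx 2\lambda_{\text{GP}}\cN_{\text{ren}}$ directly.

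Second, the error accounting is where the argument could genuinely break. Your splitting ``$C=C_1\mathfrak a+C_2$ with a fixed $O(1)$ kinetic part that survives as an additive constant'' is not coherent: if a nonzero $C_2$ multiplied $\cN_{\text{ren}}$ it could not be absorbed unless $C_2<\lambda_{\text{GP}}$, over which you have no control. The paper's point is that there is \emph{no} such part. Every error term in the expansion carries either an explicit factor of $V$ or at least one factor of $k$ (this includes the $V_{\text{ext}}$-cross terms you flag and the kinetic remainders built from $g_t$), and by choosing the cutoff radius $r$ in \eqref{def:k} comparable to $\|V\|_1$ one forces $\|k\|\lesssim\|V\|_1$, so the entire error becomes $C\|V\|_1(\cN_{\bot\ph_{\text{GP}}}+1)$. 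This is absorbed into $\lambda_{\text{GP}}\cN_{\text{ren}}$ once $C\|V\|_1<\lambda_{\text{GP}}$, yielding $H_N^{\text{trap}}-Ne_{\text{GP}}^{\text{trap}}\ge\delta\,\cN_{\bot\ph_{\text{GP}}}-C$ with $\delta=\lambda_{\text{GP}}-C\|V\|_1>0$.
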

\noindent \textbf{Remarks:} 
\begin{enumerate}
\item  [D1)] It is well-known \cite{LSY} that $ \inf\text{spec}(H_N^\text{trap}) = N e_\text{GP}^\text{trap} + o(1) $ as $N\to\infty$. In particular, Proposition \ref{prop:BEClwr} applies to the ground state $\psi_N$ of $H_N^\text{trap}$. 
\item  [D2)] Based on \cite{LSY}, it is well-known that under our assumptions we have that $\ph_\text{GP}  $ decays exponentially fast to zero as $|x|\to\infty$, with arbitrary rate. In particular, we have that $ V_\text{ext} \ph_\text{GP} \in L^p(\bR^3)$ for every $ p\geq 1$. 
\end{enumerate}
\vspace{0cm}
\begin{proof}[Proof of Prop.\ \ref{prop:BEClwr}]
Using the Euler-Lagrange equation for $\ph_\text{GP}$, that is
		\[  -\Delta + V_{\text{ext}} + 8\pi\mathfrak{a}|\ph_\text{GP}|^2 \ph_\text{GP} =   \mu_\text{GP} \ph_\text{GP}, \hspace{0.5cm}  \mu_\text{GP}= e_\text{GP}+ 4\pi\mathfrak{a}\|\ph_{\text{GP}}\|_4^4,  \]
the proof follows from a slight variation of the arguments presented in Section \ref{sec:aux}. Indeed, proceeding as in \eqref{eq:HNdec}, it is straightforward to verify that $ H_N^\text{trap} = \sum_{j=0}^4 H_{N,j}^\text{trap}$, where
		\[\begin{split} \label{eq:HNtrapdec}
		H_{N,0}^\text{trap}&= \frac N2  \big\langle \ph_{\text{GP}}, (N^3V(N.) \ast|\ph_{\text{GP}}|^2) \ph_{\text{GP}} \big\rangle  \frac{a^*(\ph_{\text{GP}})}{\sqrt{N}}\frac{a^*(\ph_{\text{GP}})}{\sqrt{N}} \frac{a(\ph_{\text{GP}})}{\sqrt{N}}\frac{a(\ph_{\text{GP}})}{\sqrt{N}}   \\
		&\hspace{0.4cm} +  N \big\langle \ph_{\text{GP}}, (-\Delta +  V_\text{ext}) \ph_{\text{GP}} \big\rangle \frac{a^*(\ph_{\text{GP}})}{\sqrt{N}} \frac{a(\ph_{\text{GP}})}{\sqrt{N}}   \\
		H_{N,1}^\text{trap}&=  a^*(\ph_{\text{GP}}) a \big(Q_t ( N^3V (N.) \ast|\ph_{\text{GP}}|^2) \ph_{\text{GP}}) \big)-   a^*(\ph_{\text{GP}}) a \big(Q_t  ( 8\pi \mathfrak{a} |\ph_{\text{GP}}|^2 \ph_{\text{GP}}) \big)\\
		&\hspace{0.5cm} -  \frac{a^*(\ph_{\text{GP}})}{\sqrt{N}} a \big(Q_t ( N^3V(N.) \ast|\ph_{\text{GP}}|^2) \ph_{\text{GP}}) \big) \frac{\cN_{\bot \ph_{\text{GP}}} }{\sqrt{N}} +\text{h.c.}, \\
		H_{N,2}^\text{trap} &= \int dx\, a^*(Q_{x}) (-\Delta_x + V_\text{ext}(x)) a(Q_{x}) \\
				  &\hspace{0.3cm} +\int dx dy  \, N^3 V(N(x-y)) |\ph_{\text{GP}} (y)|^2   a^*(Q_{x}) a (Q_{x}) \frac{a^*(\ph_{\text{GP}})}{\sqrt{N}}\frac{a(\ph_{\text{GP}})}{\sqrt{N}} \\
				  &\hspace{0.3cm}+\int dx dy  \, N^3 V(N(x-y)) \ph_{\text{GP}} (x) \overline{\ph}_t (y) a^*(Q_{x}) a(Q_{y})  \frac{a^*(\ph_{\text{GP}})}{\sqrt{N}}\frac{a(\ph_{\text{GP}})}{\sqrt{N}} \\
    				  &\hspace{0.3cm} +\!   \frac12\int \!dx dy\, N^3V(N(x-y)) \!\Big( \ph_{\text{GP}} (x)\ph_{\text{GP}}(y) a^*(Q_{x}) a^*(Q_{y}) \frac{a(\ph_{\text{GP}})}{\sqrt{N}}\frac{a(\ph_{\text{GP}})}{\sqrt{N}}\!+\!\text{h.c.} \!\Big), \\
		H_{N,3}^\text{trap} &= \int dx dy \, N^{5/2} V(N(x-y)) \ph_{\text{GP}} (y) a^*(Q_{x}) a^*(Q_{y})  a(Q_{x}) \frac{a(\ph_{\text{GP}})}{\sqrt{N}} + \text{h.c.} , \\
		H_{N,4}^\text{trap}&= \frac1{2}\int dx dy\, N^2V(N(x-y)) a^*(Q_{x}) a^*(Q_{y}) a(Q_{y}) a(Q_{x}) . 
		\end{split} \]
Here, we set $ Q_{x} = (Q_{t,x})_{|t=0}$ compared to our previous notation. Similarly, we continue to use the notation $ b_x, b^*_y, \cN_\text{ren}, \cH_\text{ren}, k \equiv (k_t)_{|t=0}$, etc., understanding implicitly that this refers to $t=0$ so that all operators are related to $\ph_\text{GP}$. Now, to control $H_N^\text{trap}$ relative to $ \cH_\text{ren}$ and $\cN_\text{ren}$, a simple generalization of the arguments in Section \ref{sec:aux} shows that
		\[\begin{split} 
		H_N^\text{trap} & \geq  \frac{N} 2 \big\langle \ph_{\text{GP}}, N^3(Vf)(N.)\ast | \ph_{\text{GP}} |^2\ph_{\text{GP}} \big\rangle  \frac{a^*(\ph_{\text{GP}})}{\sqrt{N}}\frac{a^*(\ph_{\text{GP}})}{\sqrt{N}} \frac{a(\ph_{\text{GP}})}{\sqrt{N}}\frac{a(\ph_{\text{GP}})}{\sqrt{N}}\\
		&\hspace{0.3cm} +  N \big\langle \ph_{\text{GP}}, (-\Delta +  V_\text{ext}) \ph_{\text{GP}} \big\rangle \frac{a^*(\ph_{\text{GP}})}{\sqrt{N}} \frac{a(\ph_{\text{GP}})}{\sqrt{N}}  \\
		&\hspace{0.3cm}  + \frac12 \int dx \, b^*_x  \Big(-\Delta_x + V_\text{ext}(x) + N^3V(N.)\ast | \ph_{\text{GP}} |^2 \ph_{\text{GP}}   \Big)  b_x   - C \|V\|_1 (\cN_{\bot \ph_\text{GP}}+1 ) 
		\end{split} \] 
for some universal $C>0$. Notice that this uses $ \cV_\text{ren}\geq 0$. Using the regularity of $\ph_\text{GP}\in H^1(\bR^3)$, the property $0\leq f\leq 1$, the identity $a^*(\ph_{\text{GP}})a(\ph_{\text{GP}}) = N- \cN_{\bot \ph_\text{GP}} $ and that $ \cN_{\bot \ph_\text{GP}}$ and $\cN_\text{ren}$ are of comparable size by Lemma \ref{lm:aux}, we get  
		\[\begin{split} 
		&H_N^\text{trap} -Ne_\text{GP}^\text{trap}\\
		& \geq      \frac12 \int dx \, b^*_x  \big(-\Delta_x + V_\text{ext}(x) +  8\pi\mathfrak{a} | \ph_{\text{GP}} |^2 \ph_{\text{GP}}(x)  -\mu_\text{GP} \big)  b_x   - C  \|V\|_1(\cN_{\bot \ph_\text{GP}}\!+1 ).
		\end{split} \] 
Here, we chose the radius $r>0$ in the definition of \eqref{def:k} w.l.o.g.\ comparable to $\|V\|_1$. Finally, standard results imply that $h_\text{GP} =  -\Delta  + V_\text{ext}(x) +  8\pi\mathfrak{a} | \ph_{\text{GP}} |^2 \ph_{\text{GP}}  -\mu_\text{GP} $ is gapped above its ground state energy, for some gap $2 \lambda_\text{GP}>0$. By the Euler-Lagrange equation, $ \ph_\text{GP}$ is its unique positive ground state (with zero ground state energy) so that 
		\[\begin{split} 
		H_N^\text{trap}  \geq  Ne_\text{GP}^\text{trap}   +  \lambda_\text{GP} \,  \cN_\text{ren}    - C \|V\|_1  (\cN_{\bot \ph_\text{GP}}+1 ) \geq Ne_\text{GP}^\text{trap}  + \delta\,  \cN_{\bot \ph_\text{GP} } - C \|V\|_1- C,
		\end{split} \] 
for  $   \delta=    \lambda_\text{GP}-C \|V\|_1>0 $, if $\|V\|_1$ is small enough. By \eqref{eq:linkNbot}, this implies the claim. 
\end{proof}


\medskip
\noindent\textbf{Acknowledgements.} C. B. and W. K. acknowledge support by the Deutsche Forschungsgemeinschaft (DFG, German Research Foundation) under Germany’s Excellence Strategy – GZ 2047/1, Projekt-ID 390685813.


\end{document}